\DeclareMathAlphabet{\mathbbold}{U}{bbold}{m}{n}
\newcommand{\reg}[1]{\textsc{#1-reg}}
\newcommand{\Rd}[1]{\textsc{Rd}(#1)}
\newcommand{\enc}[1]{\llbracket #1\rrbracket}
\newcommand{\act}[2][]{{\stackrel{#2}{\longrightarrow}}^{#1}}
\newcommand{\Act}[1]{{\stackrel{#1}{\longrightarrow}}{^{*}}}
\newcommand{\actx}[2][]{\xrightarrow{#2}^{#1}}
\newcommand{\norm}[1]{\lVert #1 \rVert}
\newcommand{\bnorm}[1]{\lVert #1 \rVert_b}
\newcommand{\beq}{\simeq}
\newcommand{\weq}{\approx}
\newlength{\probwidth}
\newcommand{\nprob}[4][9]{
\begin{center}\normalfont\fbox{
\addtolength{\probwidth}{#1cm}\parbox{\probwidth}{\textsc{#2}\\\hspace*{1.5em}
\begin{tabular}[t]{rp{#1cm}}
\textit{Instance:}&#3.\\\textit{Question:}&#4
\end{tabular}}}
\end{center}}
\title{Two Lower Bounds for BPA}
\author[1]{Qiang Yin}
\author[2]{Mingzhang Huang}
\author[3]{Chaodong He}
\affil[1]{Beihang University, China\\
  BASICS, Shanghai Jiao Tong University, China\\
  \texttt{yinqiang@buaa.edu.cn}}
\affil[2]{BASICS, Shanghai Jiao Tong University, China\\
  \texttt{mingzhanghuang@gmail.com}}
\affil[3]{University of Science and Technology of China \\
  \texttt{hcd@ustc.edu.cn}}
\authorrunning{J.\,Q. Open and J.\,R. Access} 
\subjclass{F.4.2 Grammars and Other Rewriting Systems} 
\keywords{BPA, branching bisimulation, weak bisimulation, regularity}
\begin{document}

\maketitle

\begin{abstract}
Branching bisimilarity on normed Basic Process Algebra (BPA) was claimed to be EXPTIME-hard in
previous papers without any explicit proof. Recently it is reminded by Jan\v{c}ar that the
claim is not so dependable. In this paper, we develop a new complete proof for EXPTIME-hardness
of branching bisimilarity on normed BPA. We also prove the associate regularity problem on
normed BPA is PSPACE-hard and in EXPTIME. This improves previous P-hard and NEXPTIME result.
\end{abstract}
\section{Introduction}
Equivalence checking is a core issue of system verification. It
asks whether two processes are related by a specific equivalence.
Baeten, Bergstra and Klop~\cite{BBK87} proved the remarkable result that strong bisimilarity
checking is decidable on normed Basic Process Algebra (BPA).  Extensive work has been aroused since
their seminal paper, dealing with  decidability or complexity issues of checking
bisimulation equivalence on various infinite-state systems (\cite{KuceraJancar2006survey}
for a survey and \cite{Srba2004Roadmap} for an updated overview on this topic).  BPA is a
basic infinite-state system model, and the decidability of weak bisimilarity on BPA is
one of the central open problems.

Fu proved branching bisimilarity, a standard refinement of weak bisimilarity,
is decidable on {normed} BPA~\cite{Fu2013BPA}. Fu also showed the decidability of associate
regularity checking problem in the same paper, which asks
if there exists a finite-state process branching bisimilar to a given normed BPA process.
Recently Jan\v{c}ar and Czerwi{\'{n}}ski improved both decidability results to
NEXPTIME \cite{CzerwinskiJancar2015}, while He and Huang further showed the branching
bisimilarity can actually be decided in EXPTIME \cite{HeHuang2015}.
Both equivalence checking and regularity checking w.r.t. branching
bisimilarity on BPA are EXPTIME-hard \cite{Kiefer2013, Mayr2005}.

Nevertheless, on the \emph{normed} subclass of BPA, the lower bounds for both problems w.r.t. branching bisimilarity
are much less clear. For weak bisimilarity checking on normed BPA, there are several lower bound results in the literature.
St\v{r}{\'{i}}brn{\'{a}} first gave a NP-hard~\cite{StriAbrna1998a} result by reducing from the Knapsack Problem;
Srba then improved it to PSPACE-hard~\cite{Srba2002inf} by reducing from QSAT (Quantified SAT);
and the most recent lower bound is EXPTIME-hard
proved by Mayr~\cite{Mayr2005} by a reduction from the accepting problem of alternation
linear-bounded automaton.

We can verify that all these constructions do not work for branching bisimilarity.
The key reason is that they all make use of multiple state-change internal actions in one process to match certain action in the other.
We previously claimed~\cite{Fu2013BPA, HeHuang2015} that a slight modification on
Mayr's construction~\cite{Mayr2005} is feasible for branching bisimilarity.
However, recently in~\cite{Jancar2016} and in a private correspondence with Jan\v{c}ar, we learned
that all these modifications do not work on \emph{normed} BPA.
The main challenge is  to define state-preserving internal action
sequence structurally in normed BPA. The construction becomes tricky under the normedness condition.
For the same reason, it is
hard to adapt the previous constructions for NP-hard~\cite{StriAbrna1998a} and
PSPACE-hard~\cite{Srba2002inf} results to branching bisimilarity. It turns out that the
lower bound of branching bisimilarity on normed BPA is merely P-hard
\cite{Balcazar1992}. The same happens to regularity checking on normed BPA. The only
lower bound for regularity checking w.r.t. branching bisimilarity is P-hard~\cite{Balcazar1992,
  Srba2003Complexity}. Comparatively, it is PSPACE-hard for the problem w.r.t. weak bisimilarity~\cite{Srba2002inf, Srba2003Complexity}.

In this paper, we study the lower bounds of the equivalence and regularity
checking problems w.r.t. branching bisimilarity on normed BPA. Our contributions are threefold.
\begin{figure}[htbp]
  \centering
  \begin{tabular}{|c||c|c|}
    \hline
    &  Weak Bisimilarity & Branching Bisimilarity \\ \hline \hline
    \multirow{2}{*}{Normed BPA (EC)} & \multirow{2}{*}{EXPTIME-hard~\cite{Mayr2005}}&$\in$ EXPTIME~\cite{HeHuang2015}\\
    &   & \textbf{EXPTIME-hard}\\ \hline
    \multirow{2}{*}{BPA (EC)} &  \multirow{2}{*}{EXPTIME-hard~\cite{Mayr2005}}  &\multirow{2}{*}{EXPTIME-hard~\cite{Kiefer2013}}\\
    &   & \\ \hline
    \multirow{2}{*}{Normed BPA (RC)} &
                                       \multirow{2}{*}{PSPACE-hard~\cite{Srba2002inf,Srba2003Complexity}}  & {\bf $\in$ EXPTIME}  \\
    & & \textbf{PSPACE-hard}\\ \hline
    BPA (RC) & EXPIME-hard~\cite{Mayr2005} &EXPTIME-hard~\cite{Mayr2005}\\
    \hline
  \end{tabular}
  \caption{Equivalence checking and regularity checking on BPA}
  \label{fig:results}
\end{figure}
\begin{itemize}
\item We give the first complete proof on the EXPTIME-hardness of equivalence checking
  w.r.t. branching bisimilarity by reducing from Hit-or-Run game \cite{Kiefer2013}.
\item We show regularity checking w.r.t. branching bisimilarity is PSPACE-hard by a reduction
  from QSAT. We also show this problem is in EXPTIME, which improves previous NEXPTIME result.
\item Our lower bound constructions also work for all equivalence that
  lies between branching bisimilarity and weak bisimilarity, which implies EXPTIME-hardness and
  PSPACE-hardness lower bounds for the respective equivalence checking and regularity checking problems on normed BPA.
\end{itemize}
Fig. \ref{fig:results} summarizes state of the art in equivalence checking (EC) and
regularity checking (RC)  w.r.t. weak and branching bisimilarity on BPA.
The results proved  in this paper are marked with boldface.
In order to prove the two main lower bounds, we study the structure of redundant sets on
normed BPA. Redundant set was first introduced by Fu~\cite{Fu2013BPA} to establish the
decidability of branching bisimilarity. Generally the number of redundant sets of a normed BPA
system is exponentially large. This is also the only exponential factor in the branching
bisimilarity checking algorithms developed in the previous works~\cite{CzerwinskiJancar2015, HeHuang2015}.
Here we would fully use this fact to design our reduction.

The rest of the paper is organized as follows. Section \ref{sec:pre} introduces some basic notions; section
\ref{sec:eq_lowerbound} proves the EXPTIME-hardness of equivalence checking; section
\ref{sec:reg_lowerbound} proves the PSPACE-hard lower bound for regularity checking; section
\ref{sec:exptime_upper_bound} proves the EXPTIME upper bound for branching regularity checking; finally
section \ref{sec:conclusion} concludes with some remarks.

\section{Preliminary}\label{sec:pre}
\subsection{Basic Definitions}
A BPA system is a tuple $\Delta=(\mathcal{V},\mathcal{A},\mathcal{R})$, where $\mathcal{V}$ is a
finite set of \emph{variables} ranged over by $A,B,C,\dots,X,Y,Z$; $\mathcal{A}$ is a finite set
of \emph{actions} ranged over by $\lambda$; and $\mathcal{R}$ is a finite set of \emph{rules}. We use a
specific letter $\tau$ to denote {internal action} and use $a,b,c,d,e,f,g$ to range {visible actions} from
the set $\mathcal{A}\backslash\{\tau\}$.
A {process} is a word $w\in \mathcal{V}^{*}$, and will be denoted by $\alpha,\beta,\gamma,\delta,\sigma$. The \emph{nil process} is denoted by  $\epsilon$.
The grammar equivalence is
denoted by $=$. We assume that $\epsilon\alpha=\alpha\epsilon=\alpha$.
A rule in
$\mathcal{R}$ is in the form $X\act{\lambda}\alpha$, where $\alpha$ is a BPA process. The following labeled transition rules
define the operational semantics of the processes.
\[ \inference{X\act{\lambda}\alpha \in \mathcal{R}}{X\act{\lambda}\alpha} \quad \quad \inference{\alpha\act{\lambda}\alpha'}{\alpha\beta\act{\lambda}{\alpha'\beta}}\]
We would write $\alpha\act{}\beta$ instead of $\alpha\act{\tau}\beta$ for simplicity and use
$\Act{}$ for the reflexive transitive closure of $\act{}$.
A BPA process $\alpha$ is
\emph{normed} if there are $\lambda_1$, $\lambda_2$, \dots $\lambda_k$ and $\alpha_1$,
$\alpha_2$,\dots $\alpha_k$ s.t.
$\alpha\act{\lambda_1}\alpha_1\act{\lambda_2}\dots \act{\lambda_k}\alpha_k=\epsilon$. The norm
of $\alpha$, notation $\norm{\alpha}$, is the shortest length of such sequence to $\epsilon$. A BPA system is
normed if every variable is normed. For each process $\alpha$, we use $\textsc{Var}(\alpha)$ to represent the set of
variables that occur in $\alpha$; and we use $|\alpha|$ to denote the size of $\alpha$, which is
define to be the length of word $\alpha$. The size of each rule $X\act{\lambda}\alpha$ is
defined to be $|\alpha|+2$. The size of description of the BPA system $|\Delta|$ is defined by
$|\Delta| = |\mathcal{V}| + |\mathcal{A}| + |\mathcal{R}|$, where $|\mathcal{V}|$ is the number of
variables, $|\mathcal{A}|$ is the number of actions and $|\mathcal{R}|$ is the sum of the size of
all rules. The definition of branching bisimulation is as follows.
\begin{definition}\label{def:beq}
A symmetric relation $R$ on BPA processes is a \emph{branching bisimulation} if whenever
$\alpha R \beta$ and $\alpha\act{\lambda}\alpha'$ then one of the statements is valid:
\begin{itemize}
\item $\lambda=\tau$ and $\alpha' R \beta$;
\item $\beta\Act{}\beta''\act{\lambda}\beta'$ for some $\beta'$ and $\beta''$
  s.t. $\alpha R \beta''$ and $\alpha' R \beta'$.
\end{itemize}
\end{definition}

If we replace the second item in Definition \ref{def:beq} by the following one
\begin{center}
$\beta\Act{}\gamma_1\act{\lambda}\gamma_2\Act{}\beta'$ for some $\gamma_1,\gamma_2$ and $\beta'$ s.t. $\alpha' R \beta'$
\end{center}
then we get the definition of \emph{weak bisimulation}.
The largest branching bisimulation, denoted by $\beq$, is branching bisimilarity; and the
largest weak bisimulation, denoted by $\weq$, is weak bisimilarity.  Both $\beq$ and $\weq$ are
equivalence and are congruence w.r.t. composition. We say $\alpha$ and $\beta$ are branching
bisimilar (weak bisimilar) if $\alpha\beq\beta$ ($\alpha\weq\beta$). A process $\alpha$ is a
\emph{finite-state} process if the reachable set
$\{\beta \mid \alpha\act{\lambda_1}\alpha_1\act{\lambda_2}\dots \act{\lambda_k}\alpha_k=\beta$ and
$ k\in \mathbb{N}\}$ is finite.  Given an equivalence relation $\asymp$, we say a BPA process
$\alpha$ is \emph{regular} w.r.t. $\asymp$, i.e. is \reg{$\asymp$}, if $\alpha\asymp \gamma$ for
some finite-state process $\gamma$. Note that $\alpha$ and $\gamma$ can be defined in different systems.

In this paper we are interested in the
equivalence checking and regularity checking problems  on \emph{normed} BPA. They
are defined as follows, assuming $\asymp$ is an equivalence relation.
\nprob{Equivalence Checking w.r.t. $\asymp$}{A normed BPA system
  $(\mathcal{V},\mathcal{A},\mathcal{R})$ and two processes $\alpha$ and $\beta$}{Is that $\alpha\asymp\beta$ ?}
\nprob{Regularity Checking w.r.t. $\asymp$}{A normed BPA system
  $(\mathcal{V},\mathcal{A},\mathcal{R})$ and a process $\alpha$}{Is that $\alpha$ \reg{$\asymp$}?}

\subsection{Bisimulation Game}
Bisimulation relation has a standard game characterization~\cite{Thomas1993,Stirling1998joy} which is very useful for studying the lower
bounds. A branching (resp. weak) bisimulation game
is a 2-player game played by \emph{Attacker} and \emph{Defender}. A configuration of the game is pair of
processes $(\alpha_0,\alpha_1)$. The game is played in rounds. Each round has 3
steps. First Attacker chooses a move; Defender then responds to match Attacker's move; Attacker then set
the next round configuration according to Defender's response. One round of\emph{ branching bisimulation
game} is defined as follows, assuming $(\beta_0,\beta_1)$ is the  configuration of the
current round:
\begin{enumerate}
\item Attacker chooses $i\in \{0,1\}$, $\lambda$, and $\beta_i'$ to play $\beta_i\act{\lambda}\beta_i'$;
\item Defender responds with $\beta_{1-i}\Act{}\beta_{1-i}''\act{\lambda}\beta_{1-i}'$ for
  some $\beta_{1-i}''$ and $\beta_{1-i}'$. Defender can also play empty response if
  $\lambda=\tau$ and we let $\beta_{1-i}'=\beta_{1-i}$ if Defender play empty response.
\item Attacker set either $(\beta_i',\beta_{1-i}')$ or $(\beta_i, \beta_{1-i}'')$ to the
  next round configuration if Defender dose not play empty response. Otherwise The next round
  configuration is $(\beta_i',\beta_{1-i}')$ automatically.
\end{enumerate}
A round of \emph{weak bisimulation game} differs from the above one at the last 2 steps:
\begin{enumerate}\addtocounter{enumi}{1}
\item Defender responds with $\beta_{1-i}\Act{}\gamma\act{\lambda}\gamma'\Act{}\beta_{1-i}'$ for some
  $\gamma$, $\gamma'$ and $\beta_{1-i}'$; Defender can also play empty response when $\lambda=\tau$ and we let
  $\beta_{1-i}'=\beta_{1-i}$.
\item Attacker set the configuration of next round to be $(\beta_{i}', \beta_{1-i}')$.
\end{enumerate}

If one player gets stuck, the other one wins. If the game is played for infinitely many rounds, then Defender
wins. We say a player has a \emph{winning strategy}, w.s. for short, if he or she can win no matter how
the other one plays. Defender has a w.s. in the branching bisimulation game $(\alpha,\beta)$ iff
$\alpha\beq\beta$; Defender has a w.s. in the weak bisimulation game $(\alpha,\beta)$ iff
$\alpha\beq \beta$.
\subsection{Redundant Set}
The concept of \emph{redundant set} was defined by Fu~\cite{Fu2013BPA}. Given a normed BPA system
$\Delta=(\mathcal{V},\mathcal{A},\mathcal{R})$, a redundant set of $\alpha$, notation
$\Rd{\alpha}$, is the set of variables defined by
\begin{equation}
  \label{eq:9}
  \Rd{\alpha} =\{X \mid X\in \mathcal{V}\land X\alpha \beq \alpha\}
\end{equation}
We cannot tell beforehand whether there exists $\gamma$ such that $R=\Rd{\gamma}$ for a given
$R$.  We write the redundant set as subscribe of $\gamma_R$ to denote a process such that
$\Rd{\gamma_{R}}=R$, if such process exists. Let us recall a very important property of redundant
sets.
\begin{lemma}[\cite{Fu2013BPA}] \label{lem:fu-lemma}
  If $\Rd{\alpha} = \Rd{\beta}$, then $\gamma_1\alpha\beq\gamma_2\alpha$ iff $\gamma_1\beta\beq\gamma_2\beta$.
\end{lemma}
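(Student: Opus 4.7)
My plan is to construct an explicit branching bisimulation $R$ on $\beta$-processes from the given bisimilarity on $\alpha$-processes, and verify its properties through a case analysis on the branching bisimulation game. By the symmetric role of $\alpha$ and $\beta$ in the hypothesis $\Rd{\alpha} = \Rd{\beta}$, it suffices to prove one direction, say $\gamma_1\alpha \beq \gamma_2\alpha$ implies $\gamma_1\beta \beq \gamma_2\beta$. The candidate relation is
$$R \;\ddef\; \{(\delta_1\beta,\delta_2\beta) : \delta_1\alpha \beq \delta_2\alpha\}\cup{\beq},$$
which contains $(\gamma_1\beta,\gamma_2\beta)$ by hypothesis and is manifestly symmetric.

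The verification traverses Attacker's moves from a pair $(\delta_1\beta,\delta_2\beta)\in R$ with $\delta_1\alpha\beq\delta_2\alpha$. The easy situation is when the move is performed by the leftmost variable of $\delta_1$ (so $\delta_1\neq\epsilon$, $\delta_1\act{\lambda}\delta_1'$, and Attacker plays $\delta_1\beta\act{\lambda}\delta_1'\beta$) and Defender's matching internal path in the $\alpha$-game, namely some $\delta_2\alpha\Act{}q^{*}\act{\lambda}q'$ matching the corresponding $\delta_1\alpha\act{\lambda}\delta_1'\alpha$, never fully consumes $\delta_2$. In that sub-case $q^{*}$ and $q'$ are of the form $\delta_2^{*}\alpha$ and $\delta_2^{**}\alpha$, and the same $\tau$-transitions and the $\lambda$-action replay verbatim inside $\delta_2\beta$, the successor pairs lying in $R$ by the corresponding $\alpha$-bisimilarities. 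The harder sub-case is when the matching path passes through the pure $\alpha$ state, i.e., $\delta_2$ gets fully consumed mid-path. Applying the stuttering property of branching bisimulation to the internal segment $\delta_2\alpha\Act{}q^{*}$ (whose endpoints are both $\beq\delta_1\alpha$) forces every intermediate state, including $\alpha$ itself, to be $\beq\delta_1\alpha$; hence $\delta_1\alpha\beq\alpha$.

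The main obstacle is therefore the auxiliary transfer principle ``$\delta\alpha\beq\alpha$ implies $\delta\beta\beq\beta$'', which is precisely the $\gamma_2 = \epsilon$ instance of the lemma itself and which extends the single-variable hypothesis $\Rd{\alpha} = \Rd{\beta}$ to arbitrary words. Granted this principle, the hard sub-case closes: Defender drives $\delta_2\beta\Act{}\beta$ and then, using $\delta_1\beta\beq\beta$, matches Attacker's action from $\beta$, with the resulting successor pairs absorbed by the $\beq$-component of $R$. The symmetric configuration $\delta_1 = \epsilon$ with Attacker playing on $\beta$ reduces to the same principle. A naive induction on $|\delta|$ will not suffice for the transfer principle because the constituent variables of $\delta$ need not individually belong to $\Rd{\alpha}$; my intended strategy is to exploit the leftmost-variable semantics of normed BPA together with a finer use of stuttering along the internal reductions within $\delta\alpha$, so as to isolate the single-variable redundancy events that the hypothesis transfers to $\beta$, and then to reassemble those events into a branching bisimulation linking $\delta\beta$ and $\beta$.
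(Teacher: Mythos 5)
The paper itself gives no proof of this lemma --- it is imported verbatim from Fu~\cite{Fu2013BPA} --- so your attempt has to stand on its own. Its skeleton is the right one: the candidate relation $R=\{(\delta_1\beta,\delta_2\beta)\mid \delta_1\alpha\beq\delta_2\alpha\}\cup \beq$, the case split on whether Defender's matching path in the $\alpha$-world stays inside the prefix $\delta_2$ or reaches the bare suffix $\alpha$, and the appeal to the Computation (stuttering) Lemma in the second case are all sound moves.

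The genuine gap is that everything hinges on the ``transfer principle'' $\delta\alpha\beq\alpha\Rightarrow\delta\beta\beq\beta$ for arbitrary words $\delta$, and you do not prove it --- you only announce an ``intended strategy''. Since the hypothesis $\Rd{\alpha}=\Rd{\beta}$ speaks only of single variables, this principle is the entire mathematical content of the lemma, so the proof is unfinished where it matters most. Moreover your diagnosis of the obstacle is backwards: you claim the constituent variables of $\delta$ ``need not individually belong to $\Rd{\alpha}$'', but they must --- this is Lemma~5 of \cite{Fu2013BPA}, quoted in the appendix as Lemma~\ref{lem:forcorrdsetqsat}: $X_1\cdots X_k\alpha\beq\alpha$ iff $X_i\alpha\beq\alpha$ for every $i$. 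Given that decomposition, the transfer principle is immediate from $\Rd{\alpha}=\Rd{\beta}$ and congruence; the real work is proving the decomposition, and the standard route is via the relative branching norm (additivity $\bnorm{X_1\cdots X_k}^{\Rd{\alpha}}=\sum_i\bnorm{X_i}^{\Rd{X_{i+1}\cdots X_k\alpha}}$ together with the fact that a vanishing summand forces redundancy), machinery your sketch never engages and which cannot be replaced by ``a finer use of stuttering'' alone, since the witnessing reductions need not be silent. A smaller slip in the same case: you let Defender ``drive $\delta_2\beta\Act{}\beta$'', but $\delta_2\alpha\beq\alpha$ does not imply that $\delta_2$ can be erased by internal moves; that case should instead be closed by noting that, granted the transfer principle, the pair $(\delta_1\beta,\delta_2\beta)$ already lies in the $\beq$-component of $R$.
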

\begin{definition}[Relative Branching Norm]
  The branching norm of $\alpha$ w.r.t. $\beta$, notation $\bnorm{\alpha}^{\beta}$, is the minimal number $k\in\mathbb{N}$ s.t.
  \begin{equation}
  \alpha\beta \Act{\beq} \alpha_1'\beta \act{\jmath_1} \alpha_1\beta \Act{\beq}\alpha_2'\beta\act{\jmath_2}\alpha_2\beta \dots \Act{\beq}\alpha_k'\beta\act{\jmath_k}\alpha_k\beta\Act{\beq}\beta
  \label{eq:20}
  \end{equation}
  where $\gamma\Act{\beq}\gamma'$ represents $\gamma\Act{}\gamma'$ and $\gamma\beq \gamma'$,
  $\gamma\act{\jmath_i}\gamma'$ represents $\gamma\act{a} \gamma'$ or $\gamma\act{}\gamma'$ and
  $\gamma\not\beq\gamma'$. The \emph{branching norm} of $\alpha$, notation $\bnorm{\alpha}$, is defined
  to be $\bnorm{\alpha}^{\epsilon}$.
\end{definition}
In fact the relative branching norm is relative to the redundant set of the suffix rather than the process due to the
following.

\begin{restatable}{lemma}{lemrnome}\label{lem:rnorm}
  If $\Rd{\gamma_1}=\Rd{\gamma_2}$, then for any process $\alpha$, $\bnorm{\alpha}^{\gamma_1}= \bnorm{\alpha}^{\gamma_2}$
\end{restatable}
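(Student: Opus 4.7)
The plan is to establish $\bnorm{\alpha}^{\gamma_1} \leq \bnorm{\alpha}^{\gamma_2}$ and invoke symmetry. The key enabler is an immediate consequence of Lemma~\ref{lem:fu-lemma}: since $\Rd{\gamma_1} = \Rd{\gamma_2}$, for all BPA processes $\sigma$ and $\sigma'$,
\[
\sigma\gamma_1 \beq \sigma'\gamma_1 \iff \sigma\gamma_2 \beq \sigma'\gamma_2.
\]
Call this invariance $(\star)$; it follows by instantiating the lemma with its $\alpha$ and $\beta$ set to our $\gamma_1$ and $\gamma_2$.

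I would fix an optimal witness realising $\bnorm{\alpha}^{\gamma_2} = k$, of the form
\[
\alpha\gamma_2 \Act{\beq} \alpha_1'\gamma_2 \act{\jmath_1} \alpha_1\gamma_2 \Act{\beq} \cdots \act{\jmath_k} \alpha_k\gamma_2 \Act{\beq} \gamma_2,
\]
and transport it by substituting $\gamma_1$ for every occurrence of $\gamma_2$. By the BPA operational semantics, any transition $\sigma\gamma_2 \act{\lambda} \delta$ either fires within the prefix (so $\sigma \act{\lambda} \sigma'$, giving $\delta = \sigma'\gamma_2$, independent of the suffix) or, once the prefix has been reduced to $\epsilon$, fires from the leftmost variable of $\gamma_2$. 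Prefix transitions replay verbatim with $\gamma_1$ in place of $\gamma_2$, and the side conditions at the endpoints of each sub-derivation — namely $\alpha_i\gamma \beq \alpha_{i+1}'\gamma$ for the $\Act{\beq}$-chunks and $\alpha_i'\gamma \not\beq \alpha_i\gamma$ for the $\jmath$-steps — translate automatically by $(\star)$.

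The main technical obstacle is handling transitions that operate inside the suffix $\gamma_2$ itself: these are not guaranteed to be replayable in $\gamma_1$, since $\gamma_1$ and $\gamma_2$ share only their redundant sets, not their full transition graphs. My plan for this is to normalise the chosen witness to use only prefix transitions. Each $\Act{\beq}$-chunk starts and ends with $\gamma_2$ literally as a suffix, so any excursion into $\gamma_2$'s variables is a closed round trip that either preserves the bisimilarity class (and may therefore be excised without affecting $k$) or crosses bisimilarity classes (introducing superfluous $\jmath$-steps, contradicting minimality). In particular, the closing chunk $\alpha_k\gamma_2 \Act{\beq} \gamma_2$ forces $\alpha_k \Act{} \epsilon$ to take place entirely in the prefix, since the suffix must appear intact at the final configuration. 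After this normalisation, the transported sequence becomes a bona fide witness of length $k$ for $\bnorm{\alpha}^{\gamma_1}$, and symmetry completes the equality. I expect the bulk of the effort to lie in formalising the prefix-only normalisation.
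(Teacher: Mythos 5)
Your proposal is correct and takes essentially the same route as the paper: both arguments fix an optimal witness for one relative norm, substitute the other suffix throughout, transfer the $\beq$/$\not\beq$ side conditions via Lemma~\ref{lem:fu-lemma}, and conclude by symmetry. The prefix-only normalisation you flag as the main technical effort is not needed in the paper's version, since the definition of $\bnorm{\alpha}^{\beta}$ already displays every configuration of the witness with $\beta$ as a fixed literal suffix, so the transported transitions are prefix transitions by construction.
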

\noindent By Lemma \ref{lem:rnorm}, we will write $\bnorm{\alpha}^{R}$ for $\bnorm{\alpha}^{\gamma_{R}}$.
Some other useful properties of the relative branching norm are as follows: (1) $\alpha\beq\beta$
implies $\bnorm{\alpha} = \bnorm{\beta}$; (2)
$\bnorm{\alpha\beta} = \bnorm{\beta} + \bnorm{\alpha}^{\Rd{\beta}}$; (3) $\bnorm{\alpha}^{\Rd{\beta}} > 0 $
if $\alpha\beta\not\beq\beta$; (4) $\bnorm{\alpha} \leq \norm{\alpha}$.

\section{EXPTIME-hardness of Equivalence Checking}\label{sec:eq_lowerbound}
In this section, we show that branching bisimilarity on normed BPA is
EXPTIME-hard by a reduction from \emph{Hit-or-Run} game~\cite{Kiefer2013}.
A Hit-or-Run game is a counter game
defined by a tuple $\mathcal{G}=(S_0, S_1, \rightarrow, s_{\vdash},s_{\dashv}, k_{\dashv})$, where
$S=S_0\uplus S_1$ is a finite set of states,
$\rightarrow \subseteq S\times \mathbb{N}\times (S\cup\{s_{\dashv}\})$ is a finite set of
transition rules, $s_{\vdash}\in S$ is the initial state, $s_{\dashv}\not\in S$ is the final
state, and $k_{\dashv}\in\mathbb{N}$ is the final value.  We will use $s\act{\ell}t$ to denote
$(s,\ell,t)\in \rightarrow$ and require that $\ell=0$ or $\ell=2^k$ for some $k$.
For each $s\in S$ there is at least one rule
$(s,\ell, t) \in \rightarrow$.  A configuration of the game $(s,k)$ is an element belong to the set
$(S\cup \{s_{\dashv}\})\times \mathbb{N}$. There are two players in the game, named Player $0$ and Player $1$.
Starting from the configuration $(s_{\vdash},0)$, $\mathcal{G}$ proceeds according to the following rule:
at configuration $(s,k)\in S_i\times\mathbb{N}$, Player $i$ chooses a rule of the form $s\act{\ell}t$ and
increase the counter with $\ell$ and configuration becomes $(t,k+\ell)$. If $\mathcal{G}$ reaches
configuration $(s_{\dashv},k_{\dashv})$ then Player $0$ wins; if $\mathcal{G}$ reaches $(s_{\dashv},k)$
and $k\neq k_{\dashv}$ then Player $1$ wins; if $\mathcal{G}$ never reaches final state, then Player
$0$ also wins.
As a result Player $0$'s goal is to \emph{hit} $(s_{\dashv},k_{\dashv})$ or \emph{run from} the
final state $s_{\dashv}$. Hit-or-Run game was introduced by Kiefer~\cite{Kiefer2013} to establish the
EXPTIME-hardness of strong bisimilarity on general BPA. The problem of deciding the winner of
Hit-or-Run game is EXPTIME-complete with all numbers encoded in binary.
 The main technique result of
the section is as follows.

\begin{restatable}{proposition}{prphitorrunreduction}
\label{prp:hit_or_run_reduction}
Given a Hit-or-Run game $\mathcal{G}=(S_0, S_1, \rightarrow, s_{\vdash},s_{\dashv}, k_{\dashv})$
we can construct a normed BPA system $\Delta_1 = (\mathcal{V}_1, \mathcal{A}_1, \mathcal{R}_1)$ and two
processes $\gamma$ and $\gamma'$ in polynomial time  s.t.
\begin{center}
  Player $0$ has a w.s. $\iff \gamma \beq \gamma'\iff \gamma \weq \gamma'$.
\end{center}
\end{restatable}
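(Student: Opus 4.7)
The plan is to construct the BPA system $\Delta_1$ so that the branching bisimulation game on $(\gamma,\gamma')$ faithfully simulates a play of the Hit-or-Run game $\mathcal{G}$, with Defender impersonating Player~$0$ and Attacker impersonating Player~$1$. Each game configuration $(s,k)$ is represented by a pair of processes whose shared prefix records the current state $s$ (using one variable per element of $S$) and whose shared suffix records the counter value $k$. The two processes $\gamma$ and $\gamma'$ are identical except for a small gadget that will be exposed as non-bisimilar exactly when $\mathcal{G}$ reaches $(s_{\dashv},k)$ with $k\neq k_{\dashv}$, and can be matched away otherwise.

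The central technical tool is the exponential supply of distinct redundant sets available in a normed BPA system, flagged by Lemma~\ref{lem:fu-lemma} and Lemma~\ref{lem:rnorm}. To represent an $n$-bit counter value $k=\sum_{i<n}b_i 2^i$ (with $n$ polynomial in the binary encoding of $k_{\dashv}$), I would introduce for each bit position $i$ two variables $B_i^0,B_i^1$ together with auxiliary carry rules, and take the counter suffix to be a word $B_{n-1}^{b_{n-1}}\cdots B_1^{b_1}B_0^{b_0}$. By engineering the rules so that $\Rd{B_{n-1}^{b_{n-1}}\cdots B_0^{b_0}}$ depends exactly on the integer $k$, Lemma~\ref{lem:fu-lemma} ensures that any two configurations encoding the same value are interchangeable as suffixes. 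The operation ``add $2^{\ell}$'' is implemented as a sequence of guarded internal rewrites performing the binary carry, each step guarded by a visible action on an auxiliary variable to preserve normedness and prevent Attacker from exploiting spurious $\tau$-sequences.

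The game is simulated round by round. At a Player~$0$ state $s\in S_0$, the two sides offer a symmetric menu of moves, one per rule $s\act{\ell}t$, so that whichever move Attacker picks on one side Defender can copy on the other; this effectively lets Defender select the transition. At a Player~$1$ state $s\in S_1$, the two sides branch asymmetrically so that Attacker is forced to commit to a specific rule and Defender must follow. When the game reaches $s_{\dashv}$, a final counter-check gadget fires: using the relative branching norm $\bnorm{\cdot}^{R}$ applied to the counter suffix, Defender can equalize the two sides iff the stored value satisfies $k=k_{\dashv}$. Non-terminating plays of $\mathcal{G}$ become infinite plays of the bisimulation game, automatically won by Defender, which matches the rule that Player~$0$ wins if $\mathcal{G}$ never reaches $s_{\dashv}$.

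The main obstacle is precisely the one flagged in the introduction: under normedness, every $\tau$-rewrite must lie on a path to $\epsilon$, so state-preserving $\tau$-sequences cannot be written down naively. My remedy is to attach every carry step to a visible-action gadget that consumes an auxiliary fuel variable; these visible actions are always matchable by Defender on both sides, but they destroy the usefulness of any $\tau$-loop Attacker might try to exploit, and they keep every variable normed. Finally, to extend the result from $\beq$ to $\weq$ (and to any equivalence between them), I would verify that the carry subsystem is deterministic enough that no weak-bisimulation response differs from a branching-bisimulation response, so the constructed processes satisfy $\gamma\beq\gamma'\iff\gamma\weq\gamma'$ automatically.
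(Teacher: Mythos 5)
Your high-level architecture (redundant sets of a suffix word as an $n$-bit counter, Defender's Forcing at Player~$0$ states, direct branching at Player~$1$ states, infinite plays won by Defender) matches the paper. But the step on which everything hinges --- implementing ``add $2^{\ell}$'' --- is where your proposal breaks down. You propose ``a sequence of guarded internal rewrites performing the binary carry'' on the stored suffix $B_{n-1}^{b_{n-1}}\cdots B_0^{b_0}$. BPA is prefix rewriting: a rule can only replace the head variable, so there is no way to ``rewrite'' bits sitting inside the suffix, and popping them off to re-push corrected versions would require the head to remember an unbounded amount of already-processed information. Attaching visible fuel actions to each carry step does not cure this; it is exactly the kind of adaptation of Mayr's construction that the introduction explains fails on \emph{normed} BPA. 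The paper's actual solution is to never rewrite the counter at all: the valid-encoding definition (Definition~\ref{def:binary_number}) lets the \emph{leftmost} occurrence of $B_i^{0}$ or $B_i^{1}$ shadow everything to its right, so flipping bit $i$ is achieved by \emph{prepending} $B_i^{1-b_i}$. Since the effect of adding $2^{k}$ is determined solely by the length $i^{*}(k)$ of the run of $1$'s starting at bit $k{+}1$, there are only $n-k+1$ possible correction prefixes $\delta(k,i)$; Defender's Forcing lets Defender pick one, and Attacker audits the choice by a redundant-set bit test against $\gamma(k,i)$ (Lemma~\ref{cor:bits_rep}). Without this local-update idea your reduction does not go through.

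Two further gaps. First, your final check ``using the relative branching norm $\bnorm{\cdot}^{R}$ applied to the counter suffix'' is not a workable mechanism: the branching norm of the counter word bears no useful relation to the integer it encodes. The paper instead tests equality with $k_{\dashv}$ by the rules $X(s_{\dashv})\act{f}Z_n^{b_n}\cdots Z_1^{b_1}$ versus $X'(s_{\dashv})\act{f}\epsilon$, reducing the check to membership of the $Z_i^{b_i}$ in $\Rd{\alpha}$. Second, you do not handle overflow: increments $\ell=2^{k}$ can push the counter past $2^{n}-1$, after which an $n$-bit encoding is meaningless; the paper switches to the $Y(s),Y'(s)$ variables that track only the control state once overflow is detected (the $i^{*}(k)=n-k$ branch of the adder). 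Your closing remark that the construction ``automatically'' collapses $\beq$ and $\weq$ is the right goal but needs the explicit argument of Lemma~\ref{lem:bit_test_weak} that the bit tests cannot be passed even up to weak bisimilarity.
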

\noindent As a result we have our first lower bound.
\begin{theorem}\label{thm:exphard}
Equivalence checking w.r.t. all equivalence $\asymp$ s.t. $\beq \subseteq \asymp \subset \weq$
on normed BPA is EXPTIME-hard.
\end{theorem}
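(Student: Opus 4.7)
The plan is to derive Theorem~\ref{thm:exphard} as a direct corollary of Proposition~\ref{prp:hit_or_run_reduction}. Given a Hit-or-Run game $\mathcal{G}=(S_0,S_1,\rightarrow,s_{\vdash},s_{\dashv},k_{\dashv})$, the proposition yields a polynomial-time computable normed BPA system $\Delta_1$ together with two processes $\gamma,\gamma'$ satisfying the simultaneous biconditional
\[ \text{Player $0$ has a winning strategy in $\mathcal{G}$} \iff \gamma\beq\gamma' \iff \gamma\weq\gamma'. \]
Since deciding the winner of a Hit-or-Run game is EXPTIME-complete~\cite{Kiefer2013}, this alone establishes EXPTIME-hardness for the two endpoints $\asymp = {\beq}$ and $\asymp = {\weq}$.

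The remaining work is to lift the hardness to an arbitrary intermediate equivalence $\asymp$ with $\beq \subseteq {\asymp} \subseteq {\weq}$. I would argue by a sandwich: the left inclusion yields $\gamma\beq\gamma' \Rightarrow \gamma\asymp\gamma'$, while the right inclusion yields $\gamma\asymp\gamma' \Rightarrow \gamma\weq\gamma'$. Chaining these with the double equivalence $\gamma\beq\gamma' \iff \gamma\weq\gamma'$ supplied by Proposition~\ref{prp:hit_or_run_reduction} collapses all three statements on the instance $(\gamma,\gamma')$ to a single predicate, namely that Player~$0$ wins $\mathcal{G}$. Hence the \emph{same} polynomial-time map $\mathcal{G}\mapsto(\gamma,\gamma')$ serves as a correct many-one reduction from Hit-or-Run to equivalence checking w.r.t.\ $\asymp$ on normed BPA, and EXPTIME-hardness transfers uniformly across the entire interval.

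There is essentially no obstacle in the corollary itself; the point of arranging Proposition~\ref{prp:hit_or_run_reduction} so as to characterize Player~$0$'s winning strategy simultaneously by $\beq$ and by $\weq$ is precisely to make this sandwich argument available. All the real technical difficulty is concentrated in the construction of the normed BPA gadgets underlying the proposition---in particular, in realizing state changes via internal actions under the normedness constraint, which is the challenge highlighted in the introduction and which relies on the redundant-set machinery recalled in Section~\ref{sec:pre}.
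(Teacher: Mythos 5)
Your proposal is correct and follows exactly the route the paper takes: Theorem~\ref{thm:exphard} is stated as an immediate consequence of Proposition~\ref{prp:hit_or_run_reduction}, and the sandwich argument you spell out (using $\beq\,\subseteq\,\asymp\,\subseteq\,\weq$ together with the double biconditional and the EXPTIME-completeness of Hit-or-Run~\cite{Kiefer2013}) is precisely the implicit justification. The only cosmetic difference is that you make this one-line inference explicit, whereas the paper leaves it to the reader.
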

An EXPTIME algorithm for branching bisimilarity checking on normed BPA is presented in
\cite{HeHuang2015}. By Theorem \ref{thm:exphard}, we have the following.
\begin{corollary}
Branching bisimilarity checking on normed BPA is EXPTIME-complete.
\end{corollary}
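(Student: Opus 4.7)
The plan is to deduce EXPTIME-completeness from two pieces that are essentially already in place: an upper bound drawn from the literature, and the lower bound just stated. For membership in EXPTIME, I would simply cite the algorithm of He and Huang~\cite{HeHuang2015}, which decides $\beq$ on normed BPA in exponential time; no new work is needed here. For EXPTIME-hardness, I would instantiate Theorem~\ref{thm:exphard} with $\asymp \,=\, \beq$. The only hypothesis to check is $\beq \subseteq \asymp \subset \weq$, and this is harmless because Proposition~\ref{prp:hit_or_run_reduction} is stated so that $\gamma \beq \gamma' \iff \gamma \weq \gamma'$ on the constructed instances: any equivalence sandwiched between $\beq$ and $\weq$, including $\beq$ itself, answers the reduction correctly. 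Combining the two bounds immediately yields EXPTIME-completeness.

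The real content of this Corollary therefore lives in its two prerequisites, and the genuine obstacle is Proposition~\ref{prp:hit_or_run_reduction}. There one must encode a Hit-or-Run game, whose counter can take values up to $2^k$, into a \emph{normed} BPA system of polynomial size in such a way that Player~$0$'s winning strategies correspond to Defender's winning strategies in the branching bisimulation game on $(\gamma,\gamma')$. The natural place to hide the exponential state space is in redundant sets: by Lemma~\ref{lem:fu-lemma}, any variable $X$ with $X\alpha \beq \alpha$ acts as an invisible passenger attached to $\alpha$, and a polynomial set of variables can induce exponentially many distinct redundant sets. The relative branching norm machinery of Lemma~\ref{lem:rnorm} should provide the quantitative handle for tracking the counter's progress.

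The hardest step, flagged in the introduction, is coping with normedness. Earlier EXPTIME-hardness reductions for weak bisimilarity rely on state-preserving internal action sequences that contradict the requirement that every variable terminate, so the construction must simulate such \textquotedblleft idle\textquotedblright\ moves structurally, via transitions that change the prefix but preserve the redundant set of the suffix. Ensuring simultaneously that (i) Defender can match Attacker's moves exactly when Player~$0$ can answer Player~$1$'s moves, (ii) the coincidence $\gamma \beq \gamma' \iff \gamma \weq \gamma'$ on the image of the reduction holds, and (iii) every gadget variable is normed, is the delicate triangle. Once Proposition~\ref{prp:hit_or_run_reduction} is established, Theorem~\ref{thm:exphard} and hence the Corollary follow with no further effort.
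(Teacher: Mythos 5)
Your proposal is correct and matches the paper's own (one-line) argument exactly: the upper bound is cited from He and Huang~\cite{HeHuang2015} and the lower bound is obtained by instantiating Theorem~\ref{thm:exphard} with $\asymp\,=\,\beq$, the real work being deferred to Proposition~\ref{prp:hit_or_run_reduction} just as you say.
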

Now let us fix a Hit-or-Run game
$\mathcal{G}=(S_0, S_1, \rightarrow, s_{\vdash},s_{\dashv}, k_{\dashv})$ for this section.
The rest part of this section is devoted to proving Proposition \ref{prp:hit_or_run_reduction}
and is organized as follows: in section \ref{sec:sub:binary_number} we introduce a scheme to
represent a $n$-bits binary counter in normed BPA; in section \ref{sec:sub:binary_mnpulate} we give a way
to manipulate this binary counter by branching bisimulation games; and in section
\ref{sec:sub:reduction}, we present the full detail of our reduction and  prove its correctness.

\subsection{Binary Counter Representation}\label{sec:sub:binary_number}
A $n$-bits counter is sufficient for our purpose, where
$n=\lfloor\log_2k_{\dashv}\rfloor +1$. When the counter value is greater than $2^n-1$, Player
$0$ or Player $1$'s object is then to avoid or reach the final state $s_{\dashv}$ respectively
and the exact value of the counter  no longer matters. Another observation is that there are generally
$\exp(|\mathcal{V}|)$ many of redundant sets in a normed BPA system $\Delta=(\mathcal{V},\mathcal{A},\mathcal{R})$.
The main idea of our construction is to use a redundant set of size $n$ to represent the value of a $n$-bits binary counter. The only
challenge part is then to define a proper structure of redundant sets so that it is
fit to be manipulated by branching bisimulation games.
The normed BPA system
$\Delta_0 = (\mathcal{B}\uplus\mathcal{B}', \mathcal{A}_0, \mathcal{R}_0)$ defined as follows fulfills our
objective. 
\begin{eqnarray*}
    \mathcal{B} & = & \{ B_i^0, B_i^1 ~|~ 1 \leq i \leq n\}\\
    \mathcal{B}' & = & \{ Z_i^0, Z_i^1, B_i^b(j,b') ~|~ (i\neq j) \land  1\leq i, j \leq n \land  b, b'\in\{0,1\}\} \\
    \mathcal{A}_0 & = & \{d,\tau\} \uplus \{ a_i^0, a_i^1 ~|~ 1 \leq i \leq n\}
\end{eqnarray*}
  and $\mathcal{R}_0$ contains the following rules, where $1\leq i,j,j'\leq n$ and $b,b', b''\in\{0,1\}$:
\begin{center}
\begin{tabular}[center]{|lll|}\hline
      $Z_i^b \act{a_i^b} \epsilon$, & \ \ \ $Z_i^b \act{}\epsilon$; & \\
      $B_i^b \act{a_i^b} B_i^b$,& \ \ \ $B_i^b\act{d}\epsilon$, & \ \ \
                                                                  $B_i^b\act{a_j^{b'}} B_i^b(j,b')$; \ \ $\ (j\neq i)$\\
      $B_i^b(j,b') \act{a_i^b} B_i^b(j,b')$, &\ \ \ $B_i^b(j,b') \act{d} Z_j^{b'}$ & \ \ \
                                                                                     $B_i^b(j,b')\act{a_{j'}^{b''}} B_i^b(j',b'')$;
                                                                                     \ \
                                                                                     $\ (j\neq i \land j' \neq i)$\\
  \hline
\end{tabular}
\end{center}
Intuitively, $B_i^b$ encodes the information that the $i$-th bit of the counter is $b$. One can
verify the equality easily $\Rd{B_i^b}=\{Z_i\}$.

\begin{definition}\label{def:binary_number}
A process $\alpha$ is a \emph{valid encoding} of a $n$-bits binary number $b_nb_{n-1}\dots b_1$,
notation $\alpha\in\enc{b_nb_{n-1}\dots b_1}$, if
$\textsc{Var}(\alpha)\subseteq \mathcal{B}$ and for all $1\leq i \leq n$ there are $\alpha_{i_1}$ and $\alpha_{i_2}$
s.t. $ \alpha = \alpha_{i_1} B_i^{b_i} \alpha_{i_2}$ and $B_i^{1-b_i} \not\in \textsc{Var}(\alpha_{i_1})$.
\end{definition}
If $\alpha\in\enc{b_nb_{n-1}\dots b_1}$, we  use $\sharp\alpha$ to denote the value
$\sum_{i=1}^{n}b_i\cdot 2^{i-1}$. One can see Definition \ref{def:binary_number} as the syntax
of binary numbers in our system $\Delta_0$. This syntax allows us to update a ``binary number'' \emph{locally}.
Suppose $\alpha\in\enc{b_{n}b_{n-1}\dots b_1}$ and we want to flip the $i$-th ``bit''
of $\alpha$ to get another ``binary number'' $\beta$. Then by Definition
\ref{def:binary_number} we can simply let $\beta=B_i^{1-b_i}\alpha$, as one can verify
$\beta \in \enc{b'_nb'_{n-1}\dots b'_1}$ where $b_i'=1-b_i$ and $b_j'=b_j$ for $j\neq i$.
We now give binary numbers a semantic characterization in terms of redundant sets.
Let us see a technical lemma first.

\begin{restatable}{lemma}{lembittest}
\label{lem:bit_test}
Suppose $\textsc{Var}(\alpha) \subseteq \mathcal{B}$, the following statements are valid:
\begin{enumerate}
\item \label{lem-11-1}
  $Z_i^b\alpha\beq\alpha$ iff there is $\alpha_1$ and $\alpha_2$
  s.t. $\alpha= \alpha_1B_i^b\alpha_2$ and $B_i^{1-b}\not\in\textsc{Var}(\alpha_1)$;
\item \label{lem-11-2}
  $Z_i^b\alpha\beq\alpha$ implies $Z_i^{1-b}\alpha\not\beq\alpha$.
\end{enumerate}
\end{restatable}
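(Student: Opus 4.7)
The plan is to prove part \ref{lem-11-1} by a direct bisimulation construction for the $(\Leftarrow)$ direction and a concrete Attacker strategy for $(\Rightarrow)$, and then to deduce part \ref{lem-11-2} as a short syntactic consequence. Two observations drive the argument: (i) variables in $\mathcal{B}$ have no $\tau$-transitions, so Defender's weak prefix is always trivial and each attack must be matched by a single concrete step; and (ii) for $k\neq i$ the transitions of $B_k^c(i,b)$ and $B_k^c$ agree on every action except $d$, where $B_k^c\act{d}\epsilon$ but $B_k^c(i,b)\act{d}Z_i^b$. This near-agreement is what will let the tag ``$(i,b)$'' shuttle a $Z_i^b$ through the prefix $\alpha_1$.

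For $(\Leftarrow)$ I will exhibit a symmetric branching bisimulation $R$ consisting of the identity together with two families: all pairs $(Z_i^b\alpha,\alpha)$ with $\alpha = \alpha_1 B_i^b\alpha_2$ and $B_i^{1-b}\notin\textsc{Var}(\alpha_1)$, and all pairs $(B_k^c(i,b)\alpha,B_k^c\alpha)$ with $k\neq i$ and $\alpha$ admitting the same decomposition. Verification is a short case split. For a first-family pair the $\tau$-attack is answered by Defender's empty response, and the $a_i^b$-attack is matched either by the self-loop $B_i^b\alpha_2\act{a_i^b}B_i^b\alpha_2$ (when $\alpha_1=\epsilon$, landing in the identity) or by the step $B_{k_1}^{c_1}\alpha'\act{a_i^b}B_{k_1}^{c_1}(i,b)\alpha'$ on the leading variable of $\alpha_1$ (landing in the second family, whose decomposition property is inherited from $\alpha$). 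For a second-family pair, every non-$d$ attack yields a pair already in $R$ (either the identity or the same pair via a self-loop), while the $d$-attack steps exactly to a first-family pair.

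For $(\Rightarrow)$ I argue by contrapositive. Failure of the decomposition splits into two cases: either (a) $B_i^b\notin\textsc{Var}(\alpha)$, or (b) the leftmost $B_i^{?}$ variable in $\alpha$ is $B_i^{1-b}$. In either case Attacker iterates a two-round strategy from a configuration $(Z_i^b\sigma,\sigma)$: first play $a_i^b$ on the side that currently carries $Z_i^b$, then play $d$ on the opposite side. If $\sigma$ begins with $B_i^{1-b}$ or equals $\epsilon$, Defender cannot match the $a_i^b$-attack and loses immediately; otherwise Defender must strip a leading non-$B_i^{?}$ variable (producing a pair of the form $(B_{k_1}^{c_1}\sigma_{\mathrm{rest}}, B_{k_1}^{c_1}(i,b)\sigma_{\mathrm{rest}})$), and the subsequent $d$-attack lands us in a configuration of the same shape with one fewer prefix variable before the ``obstacle'' (the empty tail in case (a) or the leftmost $B_i^{1-b}$ in case (b)). Because the case invariant survives each reduction and the prefix strictly shrinks, Attacker eventually wins. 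The main technical task here is the bookkeeping that verifies the case invariant is preserved by each reduction.

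Finally, part \ref{lem-11-2} drops out of part \ref{lem-11-1} immediately: if both $Z_i^b\alpha\beq\alpha$ and $Z_i^{1-b}\alpha\beq\alpha$ held, then $\alpha$ would admit two decompositions $\alpha = \alpha_1 B_i^b\alpha_2 = \alpha_1''B_i^{1-b}\alpha_2''$ with $B_i^{1-b}\notin\textsc{Var}(\alpha_1)$ and $B_i^b\notin\textsc{Var}(\alpha_1'')$, forcing the leftmost $B_i^{?}$ variable of $\alpha$ to be simultaneously $B_i^b$ and $B_i^{1-b}$, a contradiction.
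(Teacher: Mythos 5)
Your proposal is correct and follows essentially the same route as the paper's: the same Attacker strategy for the ``only if'' direction of (1) (alternate $a_i^b$ and $d$ to shuttle $Z_i^b$ leftward through the prefix until it meets $B_i^{1-b}$ or $\epsilon$), and the same two-decomposition contradiction for part (2). The only difference is presentational: for the ``if'' direction you exhibit a single closed-form branching bisimulation containing the tagged pairs $(B_k^c(i,b)\alpha,\, B_k^c\alpha)$, whereas the paper argues by induction on $|\alpha_1|$ with a game argument in the inductive step; both rest on exactly the same mechanism, and your explicit relation is, if anything, a slightly cleaner way to package it.
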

\begin{proof}
We only prove one direction of  (1) here.
Suppose $\alpha=\alpha_1B_i^b\alpha_2$ and $B_i^{1-b}\not\in\textsc{Var}(\alpha_1)$, we show
$Z_i^b\alpha\beq \alpha$. Clearly
$Z_i^bB_i^b\alpha_2\beq B_i^b\alpha_2$, then it is  sufficient to show in the branching bisimulation game
$(Z_i^b\alpha_1B_i^b\alpha_2, \alpha_1B_i^b\alpha_2)$, Defender has a strategy to reach configuration
$(Z_i^bB_i^b\alpha_2, B_i^b\alpha_2)$. Without loss of generality we can assume
$B_i^b\not\in\textsc{Var}(\alpha_1)$. Let $\alpha_1= B_j^{b'}\alpha_1'$, then $j\neq i$. If
Attacker play $\alpha_1B_i^b\alpha_2\act{\lambda}\beta$ for some $\lambda$ or
$Z_i^b\alpha_1B_i^b\alpha_2\act{}\alpha_1B_i^b\alpha_2$, then Defender has a way to response
so that the configuration of next round has a identical process pair.  Attacker's optimal move is
$Z_i^b\alpha_1B_i^b\alpha_2\act{a_i^b}B_j^{b'}\alpha_1'B_i^b\alpha_2$, Defender has to respond
with $\alpha_1B_i^b\alpha_2\act{a_i^b}B_j^{b'}(i,b)\alpha_1'B_i^b\alpha_2$. Now in the new 
configuration $(B_j^{b'}\alpha_1'B_i^b\alpha_2,B_j^{b'}(i,b)\alpha_1'B_i^b\alpha_2)$, Attacker's
optimal move is to play action $d$ as other actions will either make the configuration 
unchanged or lead the game to a configuration of identical process pair. It follows that
$(Z_i^b\alpha_1'B_i^b\alpha_2, \alpha_1'B_i^b\alpha_2)$ is optimal for Attacker and
Defender. Defender then repeat this strategy until the game reaches $(Z_i^bB_i^b\alpha_2, B_i^b\alpha_2)$.
\end{proof}

\begin{proposition}[Redundant Set Characterization]
\label{prop:binary_rdset}
Let $\alpha$ be a process s.t. $\textsc{Var}(\alpha)\subseteq \mathcal{B}$,
\begin{equation}
  \label{eq:16}
  \alpha \in \enc{b_nb_{n-1}\dots b_1} \iff \Rd{\alpha} = \{Z_1^{b_1}, Z_2^{b_2}, \dots, Z_n^{b_n}\}
\end{equation}
\end{proposition}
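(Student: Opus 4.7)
The plan is to derive the biconditional from Lemma \ref{lem:bit_test}, supplemented by a branching-norm argument that rules the non-$Z$ variables out of $\Rd{\alpha}$. For direction $(\Leftarrow)$, suppose $\Rd{\alpha} = \{Z_1^{b_1},\dots,Z_n^{b_n}\}$. For each $i$ the hypothesis $Z_i^{b_i}\in \Rd{\alpha}$ unfolds, by the definition of $\Rd{\cdot}$, to $Z_i^{b_i}\alpha \beq \alpha$. Applying the ``only if'' half of Lemma \ref{lem:bit_test}(1) yields a decomposition $\alpha = \alpha_{i_1}B_i^{b_i}\alpha_{i_2}$ with $B_i^{1-b_i} \not\in \textsc{Var}(\alpha_{i_1})$. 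Collecting these decompositions for all $i$ is precisely Definition \ref{def:binary_number}, so $\alpha \in \enc{b_n b_{n-1}\dots b_1}$.

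For direction $(\Rightarrow)$, suppose $\alpha \in \enc{b_n b_{n-1}\dots b_1}$. The containment $\{Z_1^{b_1},\dots,Z_n^{b_n}\} \subseteq \Rd{\alpha}$ is immediate from the ``if'' half of Lemma \ref{lem:bit_test}(1) applied to the decomposition supplied by Definition \ref{def:binary_number}. For the reverse containment, each $Z_i^{1-b_i}$ is excluded by Lemma \ref{lem:bit_test}(2). Any remaining candidate variable $X$ is of the form $B_j^{b'}$ or $B_j^{b'}(k,b'')$, and by inspection of $\mathcal{R}_0$, every rule with such an $X$ on the left-hand side carries a visible label (one of $a_\cdot^\cdot$ or $d$), so no silent move rewrites the head $X$ in $X\gamma_R$. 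Consequently, in the sequence defining $\bnorm{X}^{\Rd{\alpha}}$ the first step off the initial $\Act{\beq}$-closure of $X\gamma_R$ is necessarily a $\jmath$-step, forcing $\bnorm{X}^{\Rd{\alpha}}\geq 1$. Property (2) of relative branching norm then gives $\bnorm{X\alpha} = \bnorm{\alpha} + \bnorm{X}^{\Rd{\alpha}} > \bnorm{\alpha}$, and property (1) yields $X\alpha\not\beq \alpha$, so $X \not\in \Rd{\alpha}$.

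The main obstacle is the second inclusion in the $(\Rightarrow)$ direction: Lemma \ref{lem:bit_test} says nothing about $B$-style variables, and excluding them requires exploiting the design of $\Delta_0$ (every $B$-style leading transition is visible) together with the additive behaviour of the branching norm. Once that observation is in place, the rest of the proof is a symmetric unfolding of the two halves of Lemma \ref{lem:bit_test}(1) together with the $Z_i^{0}/Z_i^{1}$ exclusivity from Lemma \ref{lem:bit_test}(2).
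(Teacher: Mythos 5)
Your proof is correct, and its skeleton coincides with the paper's: both directions are obtained by matching Definition \ref{def:binary_number} against the two halves of Lemma \ref{lem:bit_test}(1), with Lemma \ref{lem:bit_test}(2) excluding $Z_i^{1-b_i}$. The one place you go beyond the paper is the exclusion of the $B$-style variables from $\Rd{\alpha}$. The paper's proof passes directly from ``$\{Z_1^{b_1},\dots,Z_n^{b_n}\}\subseteq\Rd{\alpha}$ and no $Z_i^0,Z_i^1$ coexist'' to ``$\{Z_1^{b_1},\dots,Z_n^{b_n}\}=\Rd{\alpha}$'', which tacitly presupposes that only $Z$-variables can ever be redundant; it never argues this. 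You supply the missing step explicitly: since every rule for $B_j^{b'}$ and $B_j^{b'}(k,b'')$ carries a visible label, $X\gamma_R$ has no silent transitions at all, so $\bnorm{X}^{\Rd{\alpha}}\geq 1$, and the additivity and $\beq$-invariance of the branching norm then give $X\alpha\not\beq\alpha$. (Equivalently, one could observe that a redundant variable must be able to reach $\epsilon$ by internal actions alone, which only the $Z_i^b$ can do in $\Delta_0$.) So your version is, if anything, the more complete of the two; there is no gap.
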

\begin{proof}
By (\ref{lem-11-1}) of Lemma \ref{lem:bit_test} and Definition \ref{def:binary_number},
$\alpha\in \enc{b_nb_{n_1}\dots b_1}$ iff
$\{Z_1^{b_1}, Z_2^{b_2}, \dots Z_n^{b_n}\}\subseteq \Rd{\alpha}$. By
(\ref{lem-11-2}) of Lemma \ref{lem:bit_test} we cannot have both $Z_i^{0}$ and $Z_i^{1}$ in
$\Rd{\alpha}$, it follows that $\{Z_1^{b_1}, Z_2^{b_2}, \dots Z_n^{b_n}\}\subseteq \Rd{\alpha}$
iff $\{Z_1^{b_1}, Z_2^{b_2}, \dots Z_n^{b_n}\} =  \Rd{\alpha}$.
\end{proof}
Proposition \ref{prop:binary_rdset} provides us a way to test a specific ``bit'' with branching bisimulation games.
Suppose $\alpha\in\enc{b_nb_{n-1}\dots b_1}$ and we want to check whether $b_i=b$,
then by Proposition \ref{prop:binary_rdset}, we only need to check if Defender has a w.s. in the branching bisimulation game
$(Z_i^b\alpha,\alpha)$. The following lemma shows we can also do bit test with weak
bisimulation games.

\begin{restatable}{lemma}{lembittestweak}
\label{lem:bit_test_weak}
Suppose $\textsc{Var}(\alpha)\subseteq \mathcal{B}$, then $Z_i^b\alpha \beq \alpha$ iff
$Z_i^b\alpha \weq \alpha$.
\end{restatable}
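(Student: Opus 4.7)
The forward direction $Z_i^b\alpha\beq\alpha \Rightarrow Z_i^b\alpha\weq\alpha$ is immediate from the inclusion $\beq\subseteq\weq$. For the converse, I would prove the contrapositive: assuming $Z_i^b\alpha\not\beq\alpha$, I will exhibit a winning strategy for Attacker in the weak bisimulation game on $(Z_i^b\alpha,\alpha)$. The proof goes by induction on $|\alpha|$, crucially exploiting the observation that the only $\tau$-rule in $\mathcal{R}_0$ is $Z_i^b\act{}\epsilon$, so any process whose variables all lie in $\mathcal{B}$ admits \emph{no} internal transitions at all. This means that in the weak game, whenever the ``right-hand'' side of a configuration is a process over $\mathcal{B}$, Defender's response $\alpha\Actx{}\gamma\actx{\lambda}\gamma'\Actx{}\alpha'$ is forced to have $\gamma=\alpha$, and one may further check that the resulting $\gamma'$ has no leading $Z$-variable either, so $\alpha' = \gamma'$.

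For the base case $\alpha=\epsilon$, the process $\epsilon$ admits no visible transitions at all, while $Z_i^b\actx{a_i^b}\epsilon$, so Attacker wins by playing $a_i^b$. For the inductive step, write $\alpha = X\alpha'$ with $X\in\mathcal{B}$, and split into cases on $X$. If $X=B_i^b$, then taking $\alpha_1=\epsilon$, $\alpha_2=\alpha'$ in Lemma \ref{lem:bit_test}(\ref{lem-11-1}) would give $Z_i^b\alpha\beq\alpha$, contradicting the hypothesis, so this case cannot occur. If $X=B_i^{1-b}$, then consulting the rules of $\mathcal{R}_0$ shows that $B_i^{1-b}$ has no $a_i^b$-transition, and since $\alpha$ has no $\tau$ moves, $\alpha$ cannot perform $a_i^b$ at all; Attacker wins with the single move $Z_i^b\alpha\actx{a_i^b}\alpha$.

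The interesting case is $X=B_j^{b'}$ with $j\neq i$. A short argument using Lemma \ref{lem:bit_test}(\ref{lem-11-1}) applied to $\alpha'$ shows that the non-decomposition hypothesis transfers from $\alpha$ to $\alpha'$: if $Z_i^b\alpha'\beq\alpha'$ gave $\alpha' = \alpha_1'B_i^b\alpha_2'$ with $B_i^{1-b}\notin\textsc{Var}(\alpha_1')$, then $\alpha = B_j^{b'}\alpha_1'B_i^b\alpha_2'$ would witness $Z_i^b\alpha\beq\alpha$. So $Z_i^b\alpha'\not\beq\alpha'$, and by the induction hypothesis $Z_i^b\alpha'\not\weq\alpha'$. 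Attacker now plays $Z_i^b\alpha\actx{a_i^b}\alpha$. By the observation above, Defender's only response is to use the rule $B_j^{b'}\actx{a_i^b}B_j^{b'}(i,b)$, landing in the configuration $(\alpha,B_j^{b'}(i,b)\alpha')$. Attacker then plays the visible move $B_j^{b'}(i,b)\alpha'\actx{d}Z_i^b\alpha'$, provided by the rule $B_j^{b'}(i,b)\act{d}Z_i^b$. Again $\alpha$ has no $\tau$, and the only $d$ from $B_j^{b'}$ is $B_j^{b'}\act{d}\epsilon$, so Defender is forced into $\alpha'$, giving the next configuration $(Z_i^b\alpha',\alpha')$ where Attacker wins by the induction hypothesis.

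The main obstacle is the exhaustive bookkeeping of Defender's admissible weak responses in the case $X=B_j^{b'}$: I need to confirm that no hidden flexibility is introduced by the $\Actx{}$ components sandwiching the visible action, which in turn rests on the fact that every intermediate process encountered on the right-hand side of the game still has an initial variable drawn from $\mathcal{B}$ or from $\{B_j^{b'}(i,b)\}$, none of which carries a $\tau$-rule. Once this closure property is pinned down, the inductive reduction via the $d$-move is straightforward.
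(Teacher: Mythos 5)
Your proof is correct and follows essentially the same route as the paper's: the same Attacker strategy of playing $a_i^b$ to force $B_j^{b'}(i,b)\alpha'$ and then $d$ to descend to $(Z_i^b\alpha',\alpha')$, the same appeal to Lemma \ref{lem:bit_test} for the syntactic characterization, and the same key observation that processes over $\mathcal{B}$ admit no internal moves, so Defender gains nothing from the weak game. The only cosmetic difference is that you phrase the descent as an induction on $|\alpha|$ where the paper phrases it as "Attacker repeats this strategy until" one of the two terminal configurations is reached.
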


\noindent The following Lemma shows us how to test multiple bits. It is a simple consequence when
applying Computation Lemma~\cite{Fu2013BPA} to Proposition \ref{prop:binary_rdset} and Lemma
\ref{lem:bit_test_weak}.

\begin{restatable}{lemma}{corbitsrep}
\label{cor:bits_rep}
Let $\alpha\in\enc{b_nb_{n-1}\dots b_1}$ and $\beta$ be a process
s.t. $\textsc{Var}(\beta)\subseteq \{Z_1^0, Z_1^1, \dots, Z_n^0, Z_n^1\}$, then the
following statements are valid:
\begin{enumerate}
\item $\beta\alpha\beq\alpha$ iff
  $\textsc{Var}(\beta)\subseteq\{Z_1^{b_1}, Z_2^{b_2}, \dots, Z_n^{b_n}\}$;
\item $\beta\alpha\weq\alpha$ iff
  $\textsc{Var}(\beta)\subseteq\{Z_1^{b_1}, Z_2^{b_2}, \dots, Z_n^{b_n}\}$.
\end{enumerate}
\end{restatable}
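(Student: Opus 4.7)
The plan is to reduce the multi-variable bit tests of the statement to the single-variable tests already settled by Proposition~\ref{prop:binary_rdset} for $\beq$ and by Lemma~\ref{lem:bit_test_weak} for $\weq$. This is the Computation Lemma in spirit~\cite{Fu2013BPA}: tests against a fixed suffix $\alpha$ compose.

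For $(\Leftarrow)$ of part~(1), write $\beta = X_1 X_2 \cdots X_k$ with each $X_j \in \{Z_i^{b_i} : 1 \leq i \leq n\}$. Proposition~\ref{prop:binary_rdset} gives $X_j\alpha \beq \alpha$, so I induct on $k$: assuming $X_2 \cdots X_k \alpha \beq \alpha$, congruence of $\beq$ yields $X_1 X_2 \cdots X_k \alpha \beq X_1 \alpha$, and transitivity with $X_1 \alpha \beq \alpha$ gives $\beta\alpha \beq \alpha$. The $(\Leftarrow)$ direction of part~(2) then follows since $\beq \subseteq \weq$.

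For the $(\Rightarrow)$ directions, I prove part~(2) first; part~(1)~$(\Rightarrow)$ is then immediate from $\beq \subseteq \weq$. The key structural observation is that every variable of $\alpha$ lies in $\mathcal{B}$, and $\mathcal{B}$-variables carry no $\tau$-rules, so $\alpha$ has no $\tau$-transition at all. Hence in the weak bisimulation game on $(\beta\alpha,\alpha)$, whenever Attacker plays a $\tau$-move inside the prefix via some rule $Z_i^c \act{\tau} \epsilon$, Defender's only legal reply is the empty one. Starting from $\beta\alpha = X_1 X_2 \cdots X_k \alpha \weq \alpha$, Attacker thus deletes $X_1$ to force $X_2 \cdots X_k \alpha \weq \alpha$; iterating yields $X_j \cdots X_k \alpha \weq \alpha$ for every $j$. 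In particular $X_k \alpha \weq \alpha$; Lemma~\ref{lem:bit_test_weak} upgrades this to $X_k \alpha \beq \alpha$, and Proposition~\ref{prop:binary_rdset} places $X_k \in \{Z_i^{b_i}\}$. A backward induction then peels off the already-verified trailing variables using congruence of $\weq$: from $X_{k-1} X_k \alpha \weq \alpha$ and $X_k \alpha \beq \alpha$ we get $X_{k-1} \alpha \weq X_{k-1} X_k \alpha \weq \alpha$, then Lemma~\ref{lem:bit_test_weak} upgrades again, and so on.

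The main obstacle will be the $(\Rightarrow)$ direction of part~(2), since single-variable weak tests have no obvious direct composition principle analogous to that for $\beq$. What rescues the argument is the $\tau$-freeness of $\mathcal{B}$-variables in $\alpha$: it collapses Defender's response to any prefix $\tau$-move to the unique empty one and lets Attacker strip letters from $\beta$ one at a time without losing control of the game.
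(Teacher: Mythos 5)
Your proof is correct, but it takes a genuinely different route from the paper's for the hard direction. The paper simply invokes Lemma~\ref{lem:forcorrdsetqsat} (Lemma~5 of \cite{Fu2013BPA}, a consequence of the Computation Lemma), which states for arbitrary normed BPA that $X_1\cdots X_k\alpha\beq\alpha$ iff $X_i\alpha\beq\alpha$ for every $i$ (and likewise for $\weq$); combined with Lemma~\ref{lem:bit_test_weak} and Proposition~\ref{prop:binary_rdset} this settles both items in a few lines. You instead re-derive the needed decomposition from scratch for this particular system: your ($\Leftarrow$) congruence-plus-transitivity induction is exactly the easy half of that decomposition, and for ($\Rightarrow$) you exploit the fact that every variable of $\alpha$ lies in $\mathcal{B}$ and hence $\alpha$ admits no $\tau$-transition, so the weak-bisimulation clause for Attacker's move $Z_i^c\act{}\epsilon$ in the prefix leaves Defender only the empty response and forces $X_{j}\cdots X_k\alpha\weq\alpha$ for every suffix; the backward peeling via congruence then isolates each single-variable test, which Lemma~\ref{lem:bit_test_weak} and Proposition~\ref{prop:binary_rdset} finish off. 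Both steps of your stripping argument check out (the second disjunct of the weak-bisimulation clause is indeed vacuous because $\alpha\Act{}\gamma_1\act{}\gamma_2$ is impossible, and $X_{j-1}(X_j\cdots X_k\alpha)\weq X_{j-1}\alpha$ follows from congruence once the trailing variables are verified). What your approach buys is self-containedness: it avoids importing the Computation Lemma at the cost of an argument that only works because $\mathcal{B}$-variables are $\tau$-inert while the $Z$'s can silently vanish; the paper's approach is shorter and works for arbitrary stacked tests over any normed BPA, which is why it is stated as the general Lemma~\ref{lem:forcorrdsetqsat}.
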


\subsection{Binary Counter Manipulation}\label{sec:sub:binary_mnpulate}
Suppose we have a counter $\alpha\in \enc{b_nb_{n-1}\dots b_1}$ and want to
increase it by $2^k$, where $0\leq k <n$. This operation has two possible outcomes.
The counter is either updated to some $\beta\in\enc{b'_nb'_{n-1}\dots b_1'}$ with
$\sharp\beta=\sharp\alpha+2^k$, or overflow if $\sharp\alpha+2^k \geq 2^n$.
A key observation on the construction is that we can update $\alpha$ to $\beta$ \emph{locally}.
Although there are $2^n$ many possible values for $\alpha$, we can write $\beta$ as
$\delta\alpha$ for exactly $n{-}k$ possible $\delta$. Indeed, let $\gamma(k,0)$,
$\gamma(k,1)$\dots $\gamma(k,n{-}k)$ and $\delta(k,0)$, $\delta(k,1)$\dots
$\delta(k,n{-}k)$ be the processes defined by
  \[\begin{array}{rclcrcl}
    \gamma(k,0) & = & Z_{k+1}^0 \quad & \quad \quad  \quad  & \delta(k,0)&=& B_{k+1}^1 \\
    \gamma(k,1) & =  & Z_{k+2}^0Z_{k+1}^1 \quad  &\quad \quad \quad  &  \delta(k,1)&=&B_{k+2}^1B_{k+1}^0 \\
      & \vdots & & \quad \quad \quad & & \vdots & \\
    \gamma(k,n{-}k{-}1)  & = & Z_n^0Z_{n-1}^1\dots Z_{k+1}^1\quad   & \quad \quad \quad &\delta(k,n{-}k{-}1)&=& B_{n}^1B_{n-1}^0\dots B_{k+1}^0\\
    \gamma(k,n{-}k) &  = & Z_{n}^1Z_{n-1}^1\dots Z_{k+1}^1\quad  & \quad\quad \quad  & \delta(k,n{-}k) &=& B_{n}^0B_{n-1}^0\dots B_{k+1}^0 \\
\end{array}
\]
we can divide all $\alpha$ in ${n-k+1}$ classes according to $\gamma(k,0)$, $\gamma(k,1)$\dots
$\gamma(k,n{-}k)$. Intuitively, each $\gamma(k,i)$ encodes the bits that will be flipped when
increasing $\alpha$ with $2^k$, while each $\delta(k,i)$ encodes corresponding effect of that
operation.  Let $i^{*}(k)=\Sigma_{i=0}^{n-k-1}(\Pi_{j=0}^{i}b_{k+1+j})$, we have
$0\leq i^{*}(k) \leq n - k$.
Note that $i^{*}(k)$ is the maximal length of successive bits of $1$ starting from $b_{k+1}$ to
$b_n$. By Lemma~\ref{cor:bits_rep}, $\alpha\beq\gamma(k,i)\alpha$ iff $i=i^{*}(k)$. If $i^{*}(k) < n-k$, then
$\sharp\alpha+2^k < 2^n$. By definition of $\delta(k,i^{*}(k))$ we can let $\beta=\delta(k,i^{*}(k))\alpha$ and  have
$\sharp\beta=\sharp\alpha+2^k$. If $i^{*}(k) = n-k$, then $\sharp\alpha+2^k\geq 2^n$ and increasing $\alpha$
with $2^k$ will overflow. 

Using the above idea, we can design a branching bisimulation game to simulate the
addition operation. Given  a tuple $\tilde{p}=(k,N, N', O, O')$, where $0\leq k < n$,
and $N$, $N'$, $O$ and  $O'$  are some predefined processes, we define the set of 
variables $\textsc{Add}(\tilde{p})$ by
\begin{equation}
  \textsc{Add}(\tilde{p}) = \{A(\tilde{p}), A'(\tilde{p}), D(\tilde{p}),  D(\tilde{p},i), C(\tilde{p}, i), C'(\tilde{p},i) \mid 0 \leq i \leq n-k\}\label{eq:2}
\end{equation}
The following rules are for  $\textsc{Add}(\tilde{p})$, where $0\leq i, j \leq n-k$,
\[\begin{array}{ll}
    (A1).\quad A(\tilde{p})\act{c} D(\tilde{p}) &\quad  A(\tilde{p})\act{c} D(\tilde{p},i)\\
    (A2). \quad & \quad A'(\tilde{p})\act{c} D(\tilde{p},i)\\
    (A3). \quad D(\tilde{p})\act{c}C(\tilde{p},i) &\quad  \\
    (A4). \quad D(\tilde{p},i)\act{c}C'(\tilde{p},i) &\quad  D(\tilde{p},i)\act{c}C(\tilde{p},j) \quad (j\neq i)\\
    (A5). \quad C(\tilde{p},i)\act{c} \gamma(k,i) &\quad  C'(\tilde{p},i)\act{c}\epsilon\\
    (A6).\quad C(\tilde{p},i) \act{e} N\delta(k,i)&\quad  C'(\tilde{p},i) \act{e} N'\delta(k,i)
                                                  \quad (0\leq i\leq n{-}k{-}1) \\
    (A7). \quad C(\tilde{p},{n{-}k})\act{e}O&\quad   C'(\tilde{p},{n{-}k})\act{e}O'
  \end{array}
\]
The correctness of the simulation is demonstrated by the following Lemma.

\begin{lemma}\label{prp:add}
Suppose $\alpha\in\enc{b_nb_{n-1}\dots b_1}$ and
$i^{*}(k)=\Sigma_{i=0}^{n-k-1}(\Pi_{j=0}^{i}b_{k+1+j})$, then in the branching bisimulation game
starting from $(A(\tilde{p})\alpha, A'(\tilde{p})\alpha)$
\begin{itemize}
\item if $\sharp\alpha+2^k < 2^n$, then Attacker and Defender's optimal play will lead the
  game to $(N\delta(k,i^{*}(k))\alpha, N'\delta(k,i^{*}(k))\alpha)$ with $\sharp(\delta(k,i^{*}(k))\alpha) = \sharp\alpha+2^k$;
\item if $\sharp\alpha+2^k \geq 2^n$, then Attacker and Defender's optimal play will lead
  the game to $(O\alpha, O'\alpha)$.
\end{itemize}
\end{lemma}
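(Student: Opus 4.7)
The plan is to unfold the branching bisimulation game from $(A(\tilde{p})\alpha, A'(\tilde{p})\alpha)$ round by round, exploiting the structural asymmetries in rules (A1)--(A7) to identify each player's optimal move, and invoking Lemma~\ref{cor:bits_rep} whenever the play descends to a configuration of the form $(\gamma(k,i)\alpha,\alpha)$ or $(\gamma(k,i)\alpha,\gamma(k,j)\alpha)$.

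First, Attacker's only productive opening is $A(\tilde{p})\alpha \act{c} D(\tilde{p})\alpha$, since by rule (A2) the variable $A'(\tilde{p})$ has no $c$-successor of the form $D(\tilde{p})$; any alternative Attacker move can be matched by Defender with the identical move on the opposite side, yielding a syntactically equal pair of processes. So round one ends at $(D(\tilde{p})\alpha, D(\tilde{p},i)\alpha)$ for some $i$ of Defender's choosing. Round two is analogous: the asymmetry of (A3)/(A4) is that only $D(\tilde{p})$ reaches $C(\tilde{p},i)$ and only $D(\tilde{p},i)$ reaches $C'(\tilde{p},i)$, while all other $c$-successors coincide and can be matched trivially. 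Attacker's productive move is therefore $D(\tilde{p})\alpha \act{c} C(\tilde{p},i)\alpha$, forcing Defender to respond either with $C'(\tilde{p},i)\alpha$ or with some $C(\tilde{p},j)\alpha$ where $j \neq i$.

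Round three pins down Defender's earlier choice of $i$. If Defender responds with $C(\tilde{p},j)\alpha$, $j \neq i$, Attacker continues with $C(\tilde{p},i)\alpha \act{c} \gamma(k,i)\alpha$ via rule (A5); Defender can only answer $\gamma(k,j)\alpha$, and Lemma~\ref{cor:bits_rep} excludes $\gamma(k,i)\alpha \beq \gamma(k,j)\alpha$ since at most one of these is $\beq \alpha$. Hence Defender must choose $C'(\tilde{p},i)\alpha$, reaching $(C(\tilde{p},i)\alpha, C'(\tilde{p},i)\alpha)$. From here Attacker probes the $c$-move, arriving at $(\gamma(k,i)\alpha,\alpha)$, which by Lemma~\ref{cor:bits_rep} is bisimilar iff $i = i^*(k)$. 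Back-propagating to round one, Defender's unique surviving value is $i = i^*(k)$.

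Once $i = i^*(k)$ is fixed, the only remaining productive action at $(C(\tilde{p},i^*(k))\alpha, C'(\tilde{p},i^*(k))\alpha)$ is $e$, and rules (A6)/(A7) transport the game to $(N\delta(k,i^*(k))\alpha, N'\delta(k,i^*(k))\alpha)$ when $i^*(k) < n-k$ and to $(O\alpha, O'\alpha)$ when $i^*(k) = n-k$; these cases correspond exactly to $\sharp\alpha + 2^k < 2^n$ and to the overflow case respectively. The arithmetic claim $\sharp(\delta(k,i^*(k))\alpha) = \sharp\alpha + 2^k$ follows because by definition of $i^*(k)$ we have $b_{k+1} = \dots = b_{k+i^*(k)} = 1$ and $b_{k+i^*(k)+1} = 0$, so prefixing by $\delta(k,i^*(k)) = B_{k+i^*(k)+1}^1 B_{k+i^*(k)}^0 \dots B_{k+1}^0$ flips exactly the expected carry bits. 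The main obstacle is the bookkeeping at rounds one and two: one must verify exhaustively that every non-productive Attacker move, from either side and for either choice of successor, admits a matching opposite-side move yielding syntactically identical processes, so that the productive continuations I singled out are indeed the only plays either player needs to consider.
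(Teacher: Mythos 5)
Your proof follows the same route as the paper's: rules (A1)--(A4) are read as a Defender's Forcing gadget driving the game to $(C(\tilde{p},i)\alpha, C'(\tilde{p},i)\alpha)$, the choice $i=i^{*}(k)$ is pinned down by the bit test of rule (A5) together with Lemma~\ref{cor:bits_rep}, and rules (A6)/(A7) finish, including the carry arithmetic for $\delta(k,i^{*}(k))$. You unfold the gadget round by round, which is more explicit than the paper's one-line appeal to the ``classical Defender's Forcing gadget''.

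One step of your unfolding is not sound as written. When Defender answers Attacker's move to $C(\tilde{p},i)\alpha$ with $C(\tilde{p},j)\alpha$, $j\neq i$, you dismiss this deviation by asserting $\gamma(k,i)\alpha\not\beq\gamma(k,j)\alpha$ ``since at most one of these is $\beq\alpha$''. That inference is a non sequitur: when neither $i$ nor $j$ equals $i^{*}(k)$, both processes fail to be bisimilar to $\alpha$, and non-bisimilarity to a common third process says nothing about their mutual relation (when one of $i,j$ is $i^{*}(k)$ the step is fine by transitivity). To discharge this case you need a direct argument that $\gamma(k,i)\alpha$ and $\gamma(k,j)\alpha$ are distinguishable --- for instance, for $i<j$ the stack $\gamma(k,j)$ contains $Z_{k+i+1}^{1}$ while $\gamma(k,i)$ carries $Z_{k+i+1}^{0}$ on top, and the $B^{b}_{i}(j,b')$ machinery of $\Delta_0$ lets Attacker exploit this --- or some other way of showing Attacker wins from $(C(\tilde{p},i)\alpha, C(\tilde{p},j)\alpha)$. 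To your credit, the paper's own proof does not address this deviation at all, so you correctly identified a case that needs handling; you just have not actually handled it. The remainder of your argument (the identification of the productive moves in rounds one and two, the reduction to $(\gamma(k,i)\alpha,\alpha)$, and the appeal to Lemma~\ref{cor:bits_rep}) matches the paper and is correct.
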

\begin{proof}
  Rules (A1)(A2)(A3)(A4) form a classical Defender's Forcing gadget \cite{JancarSrba2008}.
  Defender can use it to force the game from configuration $(A(\tilde{p})\alpha, A'(\tilde{p})\alpha)$
  to any configuration of the form $(C(\tilde{p},i)\alpha, C'(\tilde{p},i)\alpha)$, where
  $0\leq i \leq n-k$. But Defender has to play carefully, as at configuration
  $(C(\tilde{p},i)\alpha, C'(\tilde{p},i)\alpha)$ Attacker can use rule (A5) to initiate bits
  test by forcing the game to configuration $(\gamma(k,i)\alpha,\alpha)$.
  By definition of $\gamma(k,i)$ and
  Lemma \ref{cor:bits_rep}, if $i=i^{*}(k)$ then Defender can survive the bits test as
  $\gamma(k,i^{*}(k))\alpha \beq \alpha$; otherwise Defender will lose during the bits test as
  $\gamma(k,i)\alpha \not\weq \alpha$ for $i\neq i^{*}(k)$. Then Defender's optimal move is to force the
  configuration $(C(\tilde{p},i^{*}(k))\alpha, C'(\tilde{p},i^{*}(k))\alpha)$. Attacker's optimal
  choice is to use rules (A6)(A7) to increase the binary number $\alpha$ with $2^{k}$ or flag
  an overflow error by this operation.  If $i^{*}(k) < n-k$, the game reaches
  $(N\delta(k,i^{*}(k))\alpha,N'\delta(k,i^{*}(k))\alpha)$ by rule (A6). As
  $\delta(k,i^{*}(k))$ encodes the effect of bits change caused by increasing $\alpha$ with
  $2^{k}$, one can verify that $\sharp(\delta(k,i^{*}(k))\alpha) = \sharp\alpha+2^{k}$.
  If $i^{*}(k) = n-k$, game goes to $(O\alpha, O'\alpha)$ by rule (A7).
\end{proof}

\begin{remark}
A process $\alpha$ cannot perform an immediate action if there is no $\beta$ s.t. $\alpha\act{}\beta$.
If we require $N$, $N'$, $O$ and $O'$ cannot perform immediate internal actions, then we can
replace the branching bisimulation game of $(A(\tilde{p})\alpha, A'(\tilde{p})\alpha)$ with weak
bisimulation game $(A(\tilde{p})\alpha, A'(\tilde{p})\alpha)$ in Lemma \ref{prp:add}.
\end{remark}

\subsection{The Reduction}\label{sec:sub:reduction}
We assemble the components introduced in the previous sections and present our
reduction now. Let us first recall the Hit-or-Run game
$\mathcal{G}=(S_0, S_1, \rightarrow, s_{\vdash},s_{\dashv}, k_{\dashv})$ and let
$OP(s) = \{(\ell,t) ~|~ (s,\ell,t) \in \rightarrow\}$ and
$OP= \bigcup_{s\in S_0\uplus S_1}OP(s)$. The normed BPA system
$\Delta_1=(\mathcal{V}_1, \mathcal{A}_1,\mathcal{R}_1)$
for Proposition \ref{prp:hit_or_run_reduction} is defined as follows.
\begin{eqnarray*}
  \mathcal{V}_1 & = & \mathcal{B} \uplus \mathcal{B}' \uplus \mathcal{C} \uplus \mathcal{M} \uplus \mathcal{O} \uplus \mathcal{F}\\
  \mathcal{A}_1 & = & \mathcal{A}_0 \uplus \{c,e,f,f',g\}\uplus \{a(\ell,t) ~|~ (\ell,t)\in OP\}\\
  \mathcal{R}_1 & = & \mathcal{R}_0 \uplus \mathcal{R}_1'
 \end{eqnarray*}
We define the variable  sets $\mathcal{C}$, $\mathcal{M}$, $\mathcal{O}$ and $\mathcal{F}$ and add
rules to $\mathcal{R}_1'$ in the following.

\vspace{.6em}
\noindent\textbf{($\mathcal{C}$)}. The variable set $\mathcal{C}$ is used to encode the control states of $\mathcal{G}$ and is defined by
\begin{equation}
\mathcal{C}=\{X(s), X'(s), Y(s), Y'(s) ~|~ s \in S \cup \{s_{\dashv}\}\}.\label{eq:18}
\end{equation}
Our reduction uses the branching (resp. weak) bisimulation game $\mathcal{G}'$ starting from $(\gamma,\gamma')$ to mimic the run of
$\mathcal{G}$ from $(s_{\vdash},0)$.  Defender and Attacker play the role of Player $0$ and
Player $1$ respectively. 
For $0\leq k < 2^n$, let $\mathtt{Bin}(k)$
be the unique $n$-bits binary representation of $k$. The reduction will keep the following
correspondence between $\mathcal{G}$ and $\mathcal{G}'$.  If $\mathcal{G}$ reaches configuration
$(s,k)$ with $k< 2^n$, then $\mathcal{G}'$ will reach configuration
$(X(s)\alpha, X'(s)\alpha)$ for some $\alpha\in\enc{\mathtt{Bin}(k)}$; if $\mathcal{G}$ reaches
$(s,k)$ with $k \geq 2^n$, then $\mathcal{G}'$  $(Y(s)\beta, Y'(s)\beta)$
for some  $\beta\in\enc{b_nb_{n-1}\dots b_1 }$.
Intuitively, $Y(s)$ and $Y'(s)$ are used to indicates the counter of $\mathcal{G}$ overflows.
We do not track the exact value of the counter when it overflows. The two processes $\gamma$ and
$\gamma'$  for Proposition \ref{prp:hit_or_run_reduction} are defined by
\begin{equation}
\gamma = X(s_{\vdash})B_n^0B_{n-1}^0\dots B_1^0 \quad \quad \gamma'= X'(s_{\vdash})B_n^0B_{n-1}^0\dots B_1^0 \label{eq:gamma}
\end{equation}
Clearly $(\gamma,\gamma')$ encodes the initial configuration $(s_{\vdash},0)$.

\vspace{.6em}
\noindent \textbf{($\mathcal{M}$)}. The variable set $\mathcal{M}$ is used to manipulate the $n$-bits binary counter as we discussed in
section \ref{sec:sub:binary_mnpulate} and is defined by
$\mathcal{M}=\bigcup_{\tilde{p}\in\mathbb{P}} Add(\tilde{p})$, where
$\mathbb{P}$ is a set of tuples defined by
\begin{equation}
  \label{eq:14}
  \mathbb{P} = \{(k,X(t),X'(t),Y(t),Y'(t)) ~|~ (2^k,t)\in OP \land 0\leq k < n\}.
\end{equation}
For each $Add(\tilde{p})\subseteq \mathcal{M}$, we add the rules  (A1)(A2)\dots(A7) from section
\ref{sec:sub:binary_mnpulate} to $\mathcal{R}_1'$.

\vspace{.6em}
\noindent \textbf{($\mathcal{O}$)}. The variable set $\mathcal{O}$ is used to initiate the counter update operation and is defined by
\begin{equation}
  \label{eq:15}
  \mathcal{O} = \{A(\ell,t), A'(\ell,t) ~|~ (\ell,t)\in OP\}
\end{equation}
 For each pair $(A(\ell,t), A'(\ell,t)) \in \mathcal{O}$, we add the following rules to
 $\mathcal{R}_1'$  according to $\ell$.

 \begin{itemize}
  \item $A(\ell,t)\act{g}X(t)$ and $A'(\ell,t)\act{g}X'(t)$ if $\ell =0 $;
  \item $A(\ell,t)\act{g}Y(t)$and $A'(\ell,t)\act{g}Y'(t)$ if $\ell\geq 2^n$;
  \item $A(\ell,t) \act{g}A(\tilde{p})$ and $A'(\ell,t) \act{g}A'(\tilde{p})$ if
    $0 < \ell < 2^n$, where $\tilde{p}=(\log\ell,X(t),X'(t),Y(t), Y'(t))$.
  \end{itemize}

\noindent \textbf{($\mathcal{F}$)}.  The set $\mathcal{F}$ defined as follows is used to
implement the Defender's Forcing gadgets.
\begin{equation}
  \label{eq:17}
  \mathcal{F} = \{ E_s, F_s, E_s(\ell,t), F_s(\ell,t) ~|~ s\in S_0 \land (\ell,t) \in OP(s)\}
\end{equation}
We add the following rules to  $\mathcal{R}_1'$ for the variables in $\mathcal{C}\cup\mathcal{F}$.

\begin{itemize}
\item If $s\in S_0$, then in $\mathcal{G}$ Player $0$ chooses a pair $(\ell,t)$ from $OP(s)$.  Rules
  (a1)(a2) and rules (a3)(a4) form two Defender's Forcing gadget. They allow Defender to
  choose the next move in $\mathcal{G}'$. Note that $(\ell,t), (\ell',t')\in OP(s)$

  \vspace{.2em}
  \begin{tabular}{lll}
      (a1). \  $X(s) \act{c} E_s$, & \ $X(s) \act{c} E_s(\ell,t)$, & \ $X'(s) \act{c} E_s(\ell,t)$; \\
      (a2). \  $E_s \actx{a(\ell,t)} A(\ell,t)$, & \ $E_s(\ell,t) \actx{a(\ell,t)} A'(\ell,t)$, &\
                                                                                           $E_s(\ell,t) \actx{a(\ell',t')} A(\ell',t')$; \ \  $((\ell',t') \neq (\ell,t))$\\
      (a3). \  $Y(s) \act{c} F_s$, & \ $Y(s) \act{c} F_s(\ell,t)$, & \ $Y'(s) \act{c} F_s(\ell,t)$;\\
      (a4). \  $F_s \actx{a(\ell,t)} Y(t)$, & \  $F_s(\ell,t) \actx{a(\ell,t)} Y'(t)$, & \
                                                                                  $F_s(\ell,t) \actx{a(\ell',t')} Y(t')$. \ \
                                                                                  $( (\ell',t') \neq (\ell,t))$
  \end{tabular}
\item If $s\in S_1$, then in $\mathcal{G}$ Player $1$ chooses a pair $(\ell,t)$ from $OP(s)$. Correspondingly,
  rule (b1) and (b2) let Attacker choose the next move in $\mathcal{G}'$.

  \vspace{.2em}
  \begin{tabular}{ll}
  (b1). \ $X(s)\actx{a(\ell,t)} A(\ell,t)$, & \ $ X'(s)\actx{a(\ell,t)} A'(\ell,t)$; \quad \ $(\ell,t) \in OP(s)$\\
  (b2). \ $Y(s) \actx{a(\ell,t)} Y(t)$, & \  $Y'(s) \actx{a(\ell,t)} Y'(t)$. \quad \quad $(\ell,t) \in OP(s)$
  \end{tabular}
  \vspace{.1em}
  \item  Let $\mathtt{Bin}(k_{\dashv}) = b_nb_{n-1}\dots b_1$. The following two rules for
    $X(s_{\dashv})$ and $X(s_{\dashv})$ are used
    to initiate bits test.

    \vspace{.2em}
     \begin{tabular}{ll}
     (c). \ $X(s_{\dashv}) \act{f} Z_n^{b_n}Z_{n-1}^{b_{n-1}}\dots Z_1^{b_1}$, & \quad$ X'(s_{\dashv})\act{f} \epsilon$.
     \end{tabular}
     \vspace{.1em}
   \item The following two rules are for $Y(s_{\dashv})$ and $Y(s_{\dashv})$.

     \vspace{.2em}
     \begin{tabular}{ll}
     (d). \ $Y(s_{\dashv}) \act{f} \epsilon,$ \quad $Y'(s_{\dashv}) \act{f'}\epsilon$.
     \end{tabular}
\end{itemize}
\begin{proof}[{{Proof of Proposition \ref{prp:hit_or_run_reduction}}}]
Suppose $\mathcal{G}$ reaches $(s,k)$ for some $s\in S_0\uplus S_1$ and $k < 2^n$, then the
configuration of $\mathcal{G'}$ is $(X(s)\alpha, X'(s)\alpha)$ for some
$\alpha\in\enc{\mathtt{Bin}(k)}$. If $s\in S_0$, then Player $0$ chooses a rule
$s\act{\ell}t$ and $\mathcal{G}$ proceeds to $(t,k+\ell)$. The branching
bisimulation (resp. weak) bisimulation $\mathcal{G'}$ will mimic this behavior while keep the
correspondence between $\mathcal{G}$ and $\mathcal{G'}$ in the following way.
Defender has a strategy to push $\mathcal{G}'$ from  $(X(s)\alpha, X'(s)\alpha)$ to configuration $(X(t)\beta, X'(t)\beta)$ for
some $\beta\in\enc{\mathtt{Bin}(k+\ell)}$ if $k+\ell < 2^n$, or to $(Y(t)\alpha, Y'(t)\alpha)$
if $k+\ell \geq 2^n$.  We only discuss the case $s\in S_0$ here. The argument for $s\in S_1$ is
similar.

First by rules (a1)(a2) Defender forces to the configuration
$(A(\ell,t)\alpha, A'(\ell,t)\alpha)$.
If $\ell = 0$, then $\mathcal{G'}$ reaches $(X(t)\alpha, X'(t)\alpha)$.
If $\ell \geq 2^n$, then $\mathcal{G'}$ reaches $(Y(t)\alpha, Y'(t)\alpha)$.
If $0 < \ell < 2^n$, then $\mathcal{G'}$ first reaches
$(A(\tilde{p})\alpha, A'(\tilde{p})\alpha))$.
Now the binary counter in $\mathcal{G}'$ will be updated according to $\ell$. By Lemma \ref{prp:add}, if $k+\ell < 2^n$, then the optimal play of Attacker and Defender will lead $\mathcal{G'}$ to
$(X(t)\beta, X'(t)\beta)$  with $\sharp\beta=\sharp\alpha+\ell$.
If $k+\ell \geq 2^n$, then the optimal configuration for both Attacker and Defender is
$(Y(t)\alpha, Y'(t)\alpha)$. Once $\mathcal{G}$ reaches
a configuration $(s', k')$ for some $k'\geq 2^n$ and $s'\neq s_{\dashv}$, $\mathcal{G'}$ reaches
$(Y(s')\beta, Y'(s')\beta)$ for some $\beta$. By rules (a3)(a4)(b2), $\mathcal{G'}$ will 
track of the shift of control states of $\mathcal{G}$ while keep $\beta$ intact afterward.

If Player $0$ has a strategy to hit $(s_{\dashv},k_{\dashv})$ or run from
$s_{\dashv}$, then Defender can mimic the strategy to push $\mathcal{G}'$ from
$(\gamma,\gamma')$ to configuration $(X(s_{\dashv})\alpha, X'(s_{\dashv})\alpha)$ for some
$\alpha\in \enc{\mathtt{Bin}(k_{\dashv})}$
or let $\mathcal{G'}$ played infinitely. By rule (c) and Lemma \ref{cor:bits_rep},
$X(s_{\dashv})\alpha \beq X'(s_{\dashv})\alpha$. It follows that $\gamma\beq\gamma'$.
If Player $1$ has a strategy to hit a configuration $(s_{\dashv},k)$ for some 
$k\neq k_{\dashv}$, then Attacker can mimic the strategy to push $\mathcal{G}'$ from
$(\gamma,\gamma')$ to $(X(s_{\dashv})\alpha, X'(s_{\dashv})\alpha)$ for some
$\alpha\in\enc{\mathtt{Bin}(k)}$ if $k<2^n$ or to $(Y(s_{\dashv})\beta,Y'(s_{\dashv})\beta)$ for some $\beta$ if
$k\geq 2^n$. By rule (c) and Lemma \ref{cor:bits_rep},
$X(s_{\dashv})\alpha\not\weq X'(s_{\dashv})\alpha$.
By rule (d), $Y(s_{\dashv})\beta\not\weq Y'(s_{\dashv})\beta$. It follows that $\gamma\not\weq\gamma'$.
\end{proof}


\section{PSPACE-hardness of Regularity Checking} \label{sec:reg_lowerbound}

Srba proved weak bisimilarity  can be reduced to weak regularity~\cite{Srba2003Complexity} under a
certain condition. We can verify his original construction also works for branching
regularity \reg{$\beq$}. 

\begin{theorem}[\cite{Srba2003Complexity}]\label{srba-thm}
  Given a normed BPA system $\Delta$ and two normed process $\alpha$ and $\beta$, one can
  construct in polynomial time a new normed BPA system $\Delta'$ and a process $\gamma$ s.t.
  (1). $\gamma$ is \reg{$\weq$} iff $\alpha\weq\beta$ and both
  $\alpha$ and $\beta$ are \reg{$\weq$}; and (2). $\gamma$ is \reg{$\beq$} iff $\alpha\beq\beta$ and both
  $\alpha$ and $\beta$ are \reg{$\beq$}.
\end{theorem}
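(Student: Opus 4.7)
The plan is to take Srba's original construction verbatim and verify that each step of its correctness proof still goes through when $\weq$ is replaced by $\beq$. Srba's construction augments $\Delta$ with a small number of fresh variables and actions and defines $\gamma$ so that its reachable configurations embed copies of $\alpha$ and $\beta$; roughly, $\gamma$ unfolds into an infinite process whose $\asymp$-class structure is finite exactly when $\alpha\asymp\beta$ and both are \reg{$\asymp$}. A prototypical instance uses a fresh variable $G$ with rules $G\actx{a}\alpha G$ and $G\actx{a}\beta G$ (together with auxiliary transitions to keep $G$ normed), so that the processes reachable from $G$ are precisely the mixed composites $\alpha_{i_1}\cdots\alpha_{i_k}G$ with each $\alpha_{i_j}\in\{\alpha,\beta\}$.

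For the $(\Leftarrow)$ direction, I would assume $\alpha\asymp\beta$ and that both $\alpha,\beta$ are \reg{$\asymp$}. Since $\asymp$ is a congruence for sequential composition on normed BPA (true for both $\weq$ and $\beq$), each reachable composite is $\asymp$-equivalent to $\alpha^k G$ for a single $k$, and pasting the finite-state witnesses for $\alpha$ and $\beta$ together with the shift structure of $G$ yields a finite-state process $\asymp$-equivalent to $\gamma$. For the $(\Rightarrow)$ direction I would argue contrapositively: if $\alpha\not\asymp\beta$ or one of $\alpha,\beta$ fails regularity, a pigeonhole argument on the mixed composites exhibits infinitely many distinct $\asymp$-classes reachable from $\gamma$, contradicting regularity of $\gamma$. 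This step requires a cancellation principle of the shape ``$\sigma_1 G\asymp\sigma_2 G$ implies $\sigma_1\asymp\sigma_2$,'' which Srba established for $\weq$ using unique-decomposition properties of normed BPA.

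The main obstacle is transferring precisely this cancellation step to the $\beq$ case. Branching bisimilarity is strictly more discriminating than weak bisimilarity because equivalent states must be matched through all intermediate $\tau$-steps, so one must rule out that a branching attacker can exploit silent drift inside the suffix $G$. I would handle this by engineering the recursive suffix so that its redundant set is trivial (or at least uniform across the composites being compared), at which point Lemma~\ref{lem:fu-lemma} lets me reduce $\sigma_1 G\beq\sigma_2 G$ to a statement purely about $\sigma_1,\sigma_2$, and Srba's argument transports from $\weq$ to $\beq$ step for step. Since both parts of the equivalence rely only on congruence of $\asymp$, unique decomposition on normed BPA, and the cancellation principle, the conclusion $(1)$ for $\weq$ and $(2)$ for $\beq$ follow in parallel.
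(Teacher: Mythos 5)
The first thing to note is that the paper does not actually prove this theorem: it is imported from Srba's paper for the case of $\weq$, and part (2) is only asserted with the remark that ``his original construction also works'' for $\beq$. Your proposal therefore tries to supply an argument where the paper supplies none, which is welcome, but the argument as written has a concrete gap. The gadget you take as the prototype of Srba's construction, $G\act{a}\alpha G$ and $G\act{a}\beta G$ together with an exit rule to keep $G$ normed, does not satisfy the theorem even for strong or weak bisimilarity. Take $\alpha=\beta=A$ with the single rule $A\act{a}\epsilon$: then $\alpha\beq\beta$ and both are \reg{$\beq$}, yet the reachable composites $A^kG$ have pairwise distinct values of the minimal number of visible actions needed to reach $\epsilon$ (a $\weq$-invariant on normed processes), so they are pairwise inequivalent and $G$ is not regular for any equivalence between $\beq$ and $\weq$. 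Hence the $(\Leftarrow)$ step ``pasting the finite-state witnesses \dots yields a finite-state process'' cannot work for this gadget, and the claim that each composite being equivalent to $\alpha^kG$ suffices for regularity is false: one also needs the family $\{\alpha^kG\}_k$ itself to collapse into finitely many classes, which is exactly what Srba's actual construction is engineered to achieve when $\alpha\asymp\beta$. A ``verbatim'' verification has to be carried out against that construction, not against this one.

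The second difficulty is the cancellation principle $\sigma_1G\asymp\sigma_2G\Rightarrow\sigma_1\asymp\sigma_2$ on which your $(\Rightarrow)$ direction rests. This is false in general for both $\weq$ and $\beq$ on normed BPA: it already fails for $\sigma_1=X$, $\sigma_2=\epsilon$ whenever $XG\asymp G$ but $X\not\asymp\epsilon$, i.e.\ whenever $\Rd{G}\neq\Rd{\epsilon}$, and it is not a consequence of unique decomposition, which is a strong-bisimilarity phenomenon. Your proposed repair --- arrange the suffix so that $\Rd{G}=\Rd{\epsilon}$ and then invoke Lemma~\ref{lem:fu-lemma} with the second process taken to be $\epsilon$ --- is the right instinct for $\beq$ and is consistent with how this paper handles suffixes elsewhere, but it is only announced, not carried out, and you do not check that it is compatible with the construction you would actually be using. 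As it stands the proposal neither reproduces Srba's gadget correctly nor verifies the transfer to $\beq$; to complete it you would need to write down Srba's construction explicitly and redo both directions for $\beq$ using the redundant-set machinery in place of the weak-bisimulation arguments.
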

In order to get a lower bound of branching regularity on normed BPA we only need to prove a lower
bound of branching bisimilarity. Note that we cannot adapt the previous reduction to get an
EXPTIME-hardness result for regularity as $\gamma$ and $\gamma'$ for Proposition \ref{prp:hit_or_run_reduction} are not \reg{$\beq$}.
Srba proved weak bisimilarity is PSPACE-hard~\cite{Srba2002inf} and the two processes for the
construction are \reg{$\weq$}. This implies weak regularity on normed BPA is PSPACE-hard. However, the construction in
\cite{Srba2002inf} does not work for branching bisimilarity. We can fix this problem by using previous redundant sets
construction.
\begin{proposition}\label{prp:qsat}
Given a QSAT formula $\mathcal{F}$, we can construct a normed BPA system
$\Delta_2 = (\mathcal{C}_2, \mathcal{A}_2, \mathcal{R}_2)$ and two process $X_1$ and $X_1'$ satisfies
the following conditions: (1). If $\mathcal{F}$ is true then  $X_1 \beq X_1'$; (2). If
$\mathcal{F}$ is false then $X_1 \not\weq X_2$; (3). $\alpha$ and $\beta$ are both \reg{$\beq$}.
\end{proposition}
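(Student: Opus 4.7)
The plan is to reduce QSAT to branching bisimilarity on normed BPA while preserving regularity, reusing the redundant-set-based binary encoding from Section \ref{sec:sub:binary_number} to represent a truth assignment to the QSAT variables in the stack. Write $\mathcal{F} = Q_1 x_1 \ldots Q_n x_n.\,\phi$ with $\phi = \bigwedge_j C_j$ in CNF. The system $\Delta_2$ will contain $\mathcal{B}, \mathcal{B}'$ from Section \ref{sec:sub:binary_number} so that an assignment $b_1,\ldots,b_n$ is encoded by a stack suffix $B_n^{b_n}\ldots B_1^{b_1}$ whose redundant set is exactly $\{Z_1^{b_1},\ldots,Z_n^{b_n}\}$ by Proposition \ref{prop:binary_rdset}. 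Crucially, this encoding is \emph{local}: bit $i$ can be committed by prepending a single $B_i^{b_i}$, which is what allows the reduction to respect the branching condition --- unlike Srba's PSPACE-hard construction for weak bisimilarity, which exploits chains of state-changing $\tau$-steps that branching bisimulation forbids.

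I introduce control variables $X_i, X_i'$ for $i=1,\ldots,n+1$. If $Q_i = \exists$, I install a Defender's Forcing gadget on $(X_i, X_i')$ modelled after the $X(s), X'(s)$ with $s\in S_0$ from Section \ref{sec:sub:reduction}: Defender is forced to choose some $b\in\{0,1\}$ and the game moves to $(X_{i+1}B_i^b\sigma, X_{i+1}'B_i^b\sigma)$. If $Q_i = \forall$, I equip $X_i$ and $X_i'$ with two matched visible-action rules (modelled after the $s\in S_1$ case) letting Attacker pick $b$. After $n$ rounds the configuration is $(X_{n+1}\alpha, X_{n+1}'\alpha)$ with $\alpha\in\enc{b_n\ldots b_1}$. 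The evaluation of $\phi$ is then driven by $X_{n+1}, X_{n+1}'$: Attacker picks a clause $C_j$ by a visible action; Defender picks a literal in $C_j$ via a further Defender's Forcing; and finally, analogous to rule (c) in Section \ref{sec:sub:reduction}, the chosen literal $x_i$ (resp.\ $\neg x_i$) is verified by a pair of rules $Y\act{f}Z_i^1$ versus $Y'\act{f}\epsilon$ (resp.\ $Y\act{f}Z_i^0$ versus $Y'\act{f}\epsilon$). By Lemma \ref{cor:bits_rep}, $Z_i^b\alpha\beq\alpha$ iff $b_i=b$, so the test passes exactly when the selected literal is true. Unwinding the game gives $X_1\beq X_1'$ iff Defender wins iff $\mathcal{F}$ is true, and $X_1\not\weq X_1'$ iff Attacker wins iff $\mathcal{F}$ is false; the $\beq$/$\weq$ coincidence is inherited from Lemma \ref{cor:bits_rep}, which holds for both equivalences.

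For regularity, every transition sequence from $X_1$ first pushes at most $n + O(1)$ symbols of bounded arity onto the stack and then uses bounded-length sequences to pop back to $\epsilon$, so the sets of processes reachable from $X_1$ and from $X_1'$ are finite (though of size exponential in $|\mathcal{F}|$); in particular both $X_1$ and $X_1'$ are finite-state and hence trivially \reg{$\beq$}. The main obstacle I anticipate is tuning the literal-verification gadget: in the branching setting, Defender cannot slip in an unmatched $\tau$ while a bit is being tested, so the single $f$-action on each side must sit at exactly the right point relative to the preceding Defender's Forcing for picking a literal. This is precisely the asymmetry that breaks Srba's weak-bisimilarity construction under the branching discipline, and it is exactly what the redundant-set bit test of Lemma \ref{cor:bits_rep} is designed to rescue.
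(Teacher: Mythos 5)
Your construction is correct, but it evaluates the CNF differently from the paper. Both proofs share the same skeleton: a $\forall$-quantifier is simulated by matched visible-action rules (Attacker's choice), an $\exists$-quantifier by a Defender's Forcing gadget, and the final verdict is delivered by a redundant-set bit test in the subsystem $\Delta_0$, with Lemma~\ref{lem:bit_test_weak}/Lemma~\ref{cor:bits_rep} guaranteeing that the test separates under weak bisimilarity as well, so that the $\beq$/$\weq$ gap comes for free. The difference is in what the bits record. You push the raw truth assignment ($B_i^{b_i}$ for variable $x_i$) and then append an explicit evaluation phase: Attacker selects a clause, Defender selects a literal of that clause by a further forcing gadget, and a single-bit test $Z_i^b\alpha$ versus $\alpha$ checks that literal. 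The paper instead indexes the bits by \emph{clauses}: when a player fixes the value of $x_i$ or $y_i$, the processes $\alpha(i,b)$, $\beta(i,b)$ push $B_j^1$ for every clause $C_j$ made true by that choice, so that after the assignment phase $\textsc{Var}(\gamma(\mathbb{A}))$ is exactly the set of satisfied clauses; a single multi-bit test $Z_1^1\cdots Z_n^1$ versus $\epsilon$ (rule~8 and Lemma~\ref{lem:rdset_qsat}) then checks that all clauses are hit, with no clause/literal selection phase at all. Your version is closer to Srba's original QSAT reduction and keeps the bit semantics more transparent, at the cost of two extra gadget layers whose branching-compatibility you must (and do) argue; the paper's version compiles clause satisfaction into the redundant set up front and so needs only the one terminal test. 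Your regularity argument (all reachable processes have length $O(n)$ over a finite variable set, hence $X_1,X_1'$ are finite-state) matches the paper's one-line justification and is sound.
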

Combining Theorem \ref{srba-thm} and Proposition \ref{prp:qsat} we get our second lower bound.
\begin{restatable}{theorem}{thmregpspace}
  \label{thm:reg_pspace}
Regularity checking w.r.t. all equivalence $\asymp$ s.t. $\beq \subseteq \asymp \subseteq \weq$
on normed BPA is PSPACE-hard.
\end{restatable}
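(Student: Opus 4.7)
The plan is to combine Proposition~\ref{prp:qsat} with Srba's reduction in Theorem~\ref{srba-thm}. Given a QSAT instance $\mathcal{F}$, I would first invoke Proposition~\ref{prp:qsat} to obtain a normed BPA system $\Delta_2$ and processes $X_1,X_1'$ that are both \reg{$\beq$} and that satisfy $X_1\beq X_1'$ when $\mathcal{F}$ is true and $X_1\not\weq X_1'$ when $\mathcal{F}$ is false. Since $\beq\subseteq\asymp\subseteq\weq$, these conclusions transfer directly: a finite-state witness for $\beq$-regularity is, \emph{a fortiori}, a witness for $\asymp$-regularity (because $\beq\subseteq\asymp$), so both $X_1,X_1'$ are \reg{$\asymp$}; and $X_1\beq X_1'$ forces $X_1\asymp X_1'$, while $X_1\not\weq X_1'$ forces $X_1\not\asymp X_1'$. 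In particular $\mathcal{F}$ is true iff $X_1\asymp X_1'$.

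Next I would feed $\Delta_2,X_1,X_1'$ into Srba's polynomial-time construction from Theorem~\ref{srba-thm} to obtain $\Delta'$ and a process $\gamma$. The theorem is stated for $\beq$ and $\weq$ separately, so the key technical obligation is to show that the very same $\gamma$ satisfies
\[
\gamma \text{ is \reg{$\asymp$}} \iff X_1\asymp X_1' \text{ and both } X_1,X_1' \text{ are \reg{$\asymp$}}
\]
for any $\asymp$ with $\beq\subseteq\asymp\subseteq\weq$. For the easy implication ($\Leftarrow$), if $X_1\beq X_1'$ then by part~(2) of Theorem~\ref{srba-thm} $\gamma$ is already \reg{$\beq$}, hence \reg{$\asymp$}. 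For the hard implication ($\Rightarrow$), I would argue contrapositively: if $X_1\not\asymp X_1'$ then $X_1\not\weq X_1'$, and part~(1) of Theorem~\ref{srba-thm} tells us $\gamma$ is not \reg{$\weq$}; since any finite-state process that is $\asymp$-equivalent to $\gamma$ is also $\weq$-equivalent to it, $\gamma$ cannot be \reg{$\asymp$} either. A symmetric argument rules out the case where one of $X_1,X_1'$ fails to be \reg{$\asymp$} (here we use that they are \reg{$\beq$} by Proposition~\ref{prp:qsat}, so this case does not in fact occur and can be dismissed). Combining the two implications, $\gamma$ is \reg{$\asymp$} iff $\mathcal{F}$ is true, and since QSAT is \PSPACE-hard the result follows.

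The main obstacle is the extension of Theorem~\ref{srba-thm} to an arbitrary $\asymp$ in the interval. Inspecting Srba's construction, the wrapper is designed so that (i) finite-state behaviour of $\gamma$ collapses to finite-state behaviour of $X_1$ and $X_1'$ plus their equivalence, and (ii) the distinguishing strategies it exposes are $\weq$-attacks, while its defending strategies are $\beq$-defences. Because these two endpoints sandwich every $\asymp$ in the required interval, the correctness argument goes through uniformly: positive instances are decided by $\beq$-defences that remain valid under any coarser $\asymp$, and negative instances are decided by $\weq$-attacks that remain valid under any finer $\asymp$. This squeeze is precisely the mechanism that was already used in Theorem~\ref{thm:exphard} for equivalence checking, and I expect the same bookkeeping to handle the regularity side here, yielding \PSPACE-hardness of \reg{$\asymp$} on normed BPA for every $\asymp$ with $\beq\subseteq\asymp\subseteq\weq$.
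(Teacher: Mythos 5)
Your proposal is correct and follows essentially the same route as the paper: apply Proposition~\ref{prp:qsat}, feed the result into Srba's construction (Theorem~\ref{srba-thm}), and use the sandwich $\beq\subseteq\asymp\subseteq\weq$ so that positive instances yield \reg{$\beq$} (hence \reg{$\asymp$}) and negative instances yield non-\reg{$\weq$} (hence non-\reg{$\asymp$}). The only difference is presentational: you frame an intermediate ``extension of Theorem~\ref{srba-thm} to $\asymp$'' as an obligation, but as your own argument (and the paper's) shows, the two stated parts of that theorem already suffice without inspecting Srba's construction.
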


Now let us first fix a QSAT formula
\begin{equation}
 {\cal F} = \forall x_1 \exists y_1 \forall x_2 \exists y_2 \dots \forall x_m \exists y_m. (C_1 \land C_2 \land \dots \land C_n)\label{eq:6}
\end{equation}
where $C_1\land C_2\land \dots \land C_n$ is a conjunctive normal form with boolean variables
$x_1, x_2 \dots x_m$ and $y_1, y_2, \dots y_m$.
Consider the following game interpretation of the QSAT formula $\mathcal{F}$. There are two players, $\mathbb{X}$ and
$\mathbb{Y}$, are trying to give an assignment in rounds to all the variables
$x_1, y_1, x_2, y_2 \dots, x_m, y_m$. At the $i$-th round, player $\mathbb{X}$ first
assign a boolean value $b_i$ to $x_i$ and then $\mathbb{Y}$ assign a boolean value $b_i'$
to $y_i$. After $m$ rounds we get an assignment
$\mathbb{A}=\bigcup_{i=1}^m\{x_i=b_i, y_i=b_i'\}$. If $\mathbb{A}$ satisfies
$C_1\land C_2\land \dots C_n$, then $\mathbb{Y}$ wins; otherwise $\mathbb{X}$
wins. It is easy to see that $\mathcal{F}$ is true iff $\mathbb{Y}$ has a winning strategy.
This basic idea of constructing $\Delta_2$ is to design a branching (resp. weak) bisimulation game
to mimic the QSAT game on $\mathcal{F}$. This method resembles the ideas in the previous works \cite{Srba2002inf,Srba2002BPP,Srba2002BPA}.
The substantial new ingredient in our construction is $\Delta_0$, introduced in section \ref{sec:sub:binary_number}.
$\Delta_2$ contains $\Delta_0$ as a subsystem and uses it to encode (partial)
assignments in the QSAT game. For $i\in \{1,2,\dots, m\}$ and $b\in\{0,1\}$, let
$\alpha(i,b)$ and $\beta(i,b)$ be the processes defined as follows:
\begin{itemize}
\item $\alpha(i,b)=B_{i_1}^1 B_{i_2}^1\dots B_{i_k}^1$. If $b=1$, then $i_1 < i_2 < \dots < i_k$
  are all the indices of clauses in $\mathcal{F}$ that $x_i$ occurs; if $b=0$, then $i_1 < i_2 < \dots < i_k$
  are all the indices of clauses that $\bar{x}_i$ occurs;
\item $\beta(i,b)=B^1_{i_1} B^1_{i_2}\dots B^1_{i_k}$. If $b=1$, then $i_1 < i_2 < \dots < i_k$
  are all the indices of clauses in $\mathcal{F}$ that $y_i$ occurs; if $b=0$, then $i_1 < i_2 < \dots < i_k$
  are all the indices of clauses that $\bar{y}_i$ occurs;
\end{itemize}
An assignment $\mathbb{A}=\bigcup_{i=1}^{m}\{x_i=b_i, y_i=b_i'\}$ is represented by the process
$\gamma(\mathbb{A})$ defined by
\begin{eqnarray}
  \label{eq:19}
  \gamma(\mathbb{A}) = \beta(m,b_m')\alpha(m,b_m)\dots \beta(1,b_1')\alpha(1,b_1)
\end{eqnarray}
The following lemma tells us how to test the satisfiability of  $\mathbb{A}$
by bisimulation games.

\begin{restatable}{lemma}{corrdsetqsat}
\label{lem:rdset_qsat}
Suppose $\textsc{Var}(\gamma)\subseteq \{B_1^1, B_2^1, \dots, B_n^1\}$, then the following are
equivalent: (1). $Z_1^1Z_2^1\dots Z_n^1\gamma \beq \gamma$;
(2). $Z_1^1Z_2^1\dots Z_n^1\gamma \weq \gamma$;
(3). $\textsc{Var}(\gamma) = \{B_1^1, B_2^1, \dots, B_n^1\}$.
\end{restatable}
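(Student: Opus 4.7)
The plan is to close the cycle $(3)\Rightarrow(1)\Rightarrow(2)\Rightarrow(3)$. For $(3)\Rightarrow(1)$, when $\textsc{Var}(\gamma)=\{B_1^1,\dots,B_n^1\}$ the process $\gamma$ directly satisfies Definition~\ref{def:binary_number} for the all-ones string: every required $B_i^1$ occurs and no $B_i^0$ can possibly precede it since $\gamma$ uses only superscript-$1$ variables. Hence $\gamma\in\enc{11\cdots 1}$, and clause~(1) of Lemma~\ref{cor:bits_rep} applied with $\beta=Z_1^1 Z_2^1\cdots Z_n^1$ immediately gives $Z_1^1 Z_2^1\cdots Z_n^1\gamma\beq\gamma$. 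The implication $(1)\Rightarrow(2)$ is trivial since $\beq\,\subseteq\,\weq$.

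The nontrivial step is $(2)\Rightarrow(3)$, which I would prove by contraposition. Suppose $B_i^1\notin\textsc{Var}(\gamma)$ for some $i$, and let $T=\{\,j\in\{1,\dots,n\} : B_j^1\notin\textsc{Var}(\gamma)\,\}\neq\emptyset$. The key idea is to right-extend $\gamma$ with a ``completion'' $\delta=\prod_{j\in T}B_j^0$ (in any fixed order) so that the product becomes a bona fide binary encoding. A direct check against Definition~\ref{def:binary_number} shows $\gamma\delta\in\enc{b_n\cdots b_1}$ with $b_j=1$ for $j\notin T$ and $b_j=0$ for $j\in T$: for $j\notin T$ the existing $B_j^1$ in $\gamma$ witnesses bit $1$ (no $B_j^0$ can precede it since $\gamma$ contains only superscript-$1$ variables), and for $j\in T$ the $B_j^0$ placed inside $\delta$ witnesses bit $0$ (no $B_j^1$ appears anywhere earlier in $\gamma\delta$).

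Now clause~(2) of Lemma~\ref{cor:bits_rep} with $\beta=Z_1^1 Z_2^1\cdots Z_n^1$ yields $\beta(\gamma\delta)\weq\gamma\delta$ iff $\textsc{Var}(\beta)\subseteq\{Z_j^{b_j}:1\leq j\leq n\}$. Picking any $i\in T$, we have $Z_i^1\in\textsc{Var}(\beta)$ but $Z_i^{b_i}=Z_i^0$, so the inclusion fails and $Z_1^1\cdots Z_n^1\gamma\delta\not\weq\gamma\delta$. If we had $Z_1^1\cdots Z_n^1\gamma\weq\gamma$, postcomposing $\delta$ on both sides---using that $\weq$ is a congruence w.r.t.\ composition---would give $Z_1^1\cdots Z_n^1\gamma\delta\weq\gamma\delta$, contradicting what we just established. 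Hence $Z_1^1\cdots Z_n^1\gamma\not\weq\gamma$, closing the cycle.

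The only genuine obstacle is that $\gamma$ itself need not be a valid binary encoding of anything when $\textsc{Var}(\gamma)\subsetneq\{B_1^1,\dots,B_n^1\}$, so Lemma~\ref{cor:bits_rep} cannot be applied to $\gamma$ directly. The right-multiplication by $\delta$ to install the missing $0$-bits, together with the congruence of $\weq$, is precisely the trick that sidesteps this.
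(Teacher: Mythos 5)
Your proof is correct, but it takes a different route from the paper's. The paper never needs your padding trick: it decomposes the prefix via Lemma~\ref{lem:forcorrdsetqsat} (Fu's Lemma~5), reducing $Z_1^1\cdots Z_n^1\gamma\beq\gamma$ to the single-bit tests $Z_i^1\gamma\beq\gamma$, and then applies Lemma~\ref{lem:bit_test} and Lemma~\ref{lem:bit_test_weak} \emph{directly} to $\gamma$ --- those lemmas only require $\textsc{Var}(\gamma)\subseteq\mathcal{B}$, not that $\gamma$ be a valid encoding, so the ``genuine obstacle'' you identify is an artefact of your choice to work exclusively through Lemma~\ref{cor:bits_rep} (whose hypothesis $\alpha\in\enc{b_n\cdots b_1}$ indeed fails for a deficient $\gamma$). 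Your workaround --- right-composing with $\delta=\prod_{j\in T}B_j^0$ to complete $\gamma$ to a valid encoding, invoking Lemma~\ref{cor:bits_rep}(2) on $\gamma\delta$, and transporting the conclusion back via congruence of $\weq$ --- is sound, and your verification that $\gamma\delta$ satisfies Definition~\ref{def:binary_number} is careful and correct. What each approach buys: the paper's is shorter and also yields the (1)$\Leftrightarrow$(2) equivalence directly from Lemma~\ref{lem:bit_test_weak}, whereas you obtain it indirectly by closing the cycle $(3)\Rightarrow(1)\Rightarrow(2)\Rightarrow(3)$; your version, on the other hand, treats Lemma~\ref{cor:bits_rep} as the single black box and illustrates a reusable padding-plus-congruence idiom for applying encoding-level lemmas to partial encodings. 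Both arguments are complete.
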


The normed BPA system  $\Delta_2 = ( {\cal C}_2, {\cal A}_2, {\cal R}_2)$ for Proposition \ref{prp:qsat} is defined by
\begin{eqnarray*}
  {\cal C}_2 & = & {\cal B} \uplus {\cal B}' \uplus \{X_i, Y_i, Y_i(1), Y_i(2), Y_i(3) \mid 1 \leq i \leq m\} \uplus\{ X_{m+1}, X_{m+1}'\} \\
  {\cal A}_2 & = & {\cal A}_0 \uplus \{ c_0, c_1, e\} \\
  {\cal R}_2 & = & {\cal R}_0 \uplus {\cal R}_2'
\end{eqnarray*}
and ${\cal R}_2'$ contains the following rules, where $1\leq i \leq m$,
\begin{center}
  \begin{tabular}{|llll|}
    \hline
    1.\quad  &$X_i \act{c_0} Y_i \alpha(i,0)$, & $X_i \act{c_1} Y_i \alpha(i,1)$; & \\
    2.\quad  &$X_i' \act{c_0} Y_i' \alpha(i,0)$, &  $X_i' \act{c_1} Y_i' \alpha(i,1)$; & \\
    3.\quad  &$Y_i \act{e} Y_i(1)$, & $Y_i \act{e} Y_i(2)$, & $Y_i \act{e} Y_i(3)$; \\
    4.\quad  & & $Y_i' \act{e} Y_i(2)$, & $Y_i' \act{e} Y_i(3)$;  \\
    5.\quad  & $Y_i(1) \act{c_0} X_{i+1} \beta(i,0)$, & $Y_i(1) \act{c_1} X_{i+1}\beta(i,1)$;& \\
    6.\quad  & $Y_i(2) \act{c_0} X_{i+1}'\beta(i,0)$, & $Y_i(2) \act{c_1} X_{i+1} \beta(i,1)$; & \\
    7.\quad  & $Y_i(3) \act{c_0} X_{i+1} \beta(i,0)$, & $Y_i(3) \act{c_1} X_{i+1}' \beta(i,1)$; & \\
    8.\quad  & $X_{m+1}\act{e} Z_1^1\dots Z_n^1$, & $X_{m+1}' \act{e}\epsilon$; & \\ \hline
  \end{tabular}
\end{center}
\begin{proof}[Proof of Proposition \ref{prp:qsat}]
Clearly both $X_1$ and $X_1'$ are \reg{$\beq$}. Consider the branching (resp. weak) bisimulation
game starting from $(X_1, X_1')$. A round of QSAT game on $\mathcal{F}$ will be simulated
by 3 rounds branching (resp. weak) bisimulation games. Suppose in the $i$-th round of QSAT game, player
$\mathbb{X}$ assign $b_i$ to $x_i$ and then player $\mathbb{Y}$ assign $b_i'$ to $y_i$. Then in
the branching (resp. weak) bisimulation game, Attacker use rule (1) and (2) put $\alpha(i,b_i)$
to the stack in one round; then in the following two rounds, by Defender's Forcing (rule
(3)(4)(5)(6)(7)), Defender pushes $\beta(i,b_i')$ to the stack. In this way, the branching
(resp. weak) bisimulation game reaches a configuration in the form
$(X_{m+1}\gamma(\mathbb{A}), X'_{m+1}\gamma(\mathbb{A}))$  after $3m$ rounds. Here
$\mathbb{A}=\bigcup_{i=1}^{m}\{x_i=b_i, y_i=b_i'\}$ is an assignment that $\mathbb{X}$ and
$\mathbb{Y}$ generates. It follows from Lemma \ref{lem:rdset_qsat} and rule (8) that if $\mathbb{Y}$ has a
w.s. then Defender can use it to guarantee
$X_{m+1}\gamma(\mathbb{A})\beq X'_{m+1}\gamma(\mathbb{A})$; otherwise if $\mathbb{X}$ has a w.s.
then Attacker can use it to make sure $X_{m+1}\gamma(\mathbb{A})\not\weq X'_{m+1}\gamma(\mathbb{A})$.
\end{proof}


\section{EXPTIME Upper Bound for Branching Regularity Checking}\label{sec:exptime_upper_bound}
A normed process is not \reg{$\beq$} if the branching norm of its reachable processes is unbounded.
Let us first introduce a directed weighted graph  $G(\Delta)$ that captures all the
ways to increase the branching norm by performing actions. $G(\Delta)= (V(\Delta), E(\Delta), \mathcal{W})$, where
\begin{eqnarray*}
  V(\Delta) & = & \{ (X, R) \mid X\in \mathcal{V} \land R\subseteq \mathcal{V}^0\land \exists\alpha.\Rd{\alpha}=R\} \\
  E(\Delta) & = & \{ ((X_1, R_1),(X_2, R_2)) \mid X_1\act{\lambda}\sigma X_2\delta \in\mathcal{R} \land \Rd{\delta\gamma_{R_1}} = R_2 \}
\end{eqnarray*}
and $\mathcal{W}: E(\Delta)\rightarrow \{0,1\}$ is a weight function defined as follows. For each
edge $e\in E(\Delta)$, if there is some $\delta$ s.t.
\begin{equation}
  \label{eq:22}
  e=((X_1,R_1), (X_2,R_2)) \land X_1 \act{\lambda}\sigma X_2\delta \in \mathcal{R} \land \Rd{\delta\gamma_{R_1}} = R_2 \land \bnorm{\delta}^{R_1} > 0
\end{equation}
then $\mathcal{W}(e)=1$; otherwise $\mathcal{W}(e) = 0$.
Using the EXPTIME branching bisimilarity checking algorithm from
\cite{HeHuang2015} as a black box, we can compute $G(\Delta)$ in $\exp(|\Delta|)$ time.
\begin{restatable}{proposition}{prpexpG}\label{prp:exp_G}
$G(\Delta)$ can be constructed in $\exp(|\Delta|)$ time.
\end{restatable}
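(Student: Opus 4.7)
The plan is to build $G(\Delta)$ by using the EXPTIME branching-bisimilarity algorithm of \cite{HeHuang2015} as an oracle $\mathcal{O}_\beq$. A cardinality bound is immediate: there are at most $|\mathcal{V}|\cdot 2^{|\mathcal{V}^0|}$ candidate vertices and at most $|V(\Delta)|^2\cdot|\mathcal{R}|$ candidate edges, both lying within $\exp(|\Delta|)$; it thus suffices to design every vertex and edge test using $\exp(|\Delta|)$ oracle work.

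I would proceed in two phases. First, compute the family $\mathcal{F} = \{R\subseteq\mathcal{V}^0 \mid \exists\alpha.\,\Rd{\alpha}=R\}$ together with a representative $\gamma_R$ for each $R\in\mathcal{F}$, by forward saturation. Seed with $R=\Rd{\epsilon}$, represented by $\epsilon$, obtained by querying $\mathcal{O}_\beq$ on $Y\beq\epsilon$ for each $Y\in\mathcal{V}^0$. At each stage, for every known $(R,\gamma_R)$ and every $X\in\mathcal{V}$, form $X\gamma_R$, compute $\Rd{X\gamma_R}$ by asking $\mathcal{O}_\beq$ whether $Y(X\gamma_R)\beq X\gamma_R$ for all $Y\in\mathcal{V}^0$, and record this new redundant set (with witness $X\gamma_R$) if not already present. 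Lemma \ref{lem:fu-lemma} makes the transition $R\mapsto\Rd{X\gamma_R}$ well-defined on redundant sets, so the saturation is sound and halts after at most $|\mathcal{F}|\leq 2^{|\mathcal{V}^0|}$ rounds. Second, set $V(\Delta)=\mathcal{V}\times\mathcal{F}$, and for every rule $X_1\actx{\lambda}\sigma X_2\delta\in\mathcal{R}$ and every $R_1\in\mathcal{F}$, use the same batch of $|\mathcal{V}^0|$ oracle calls to compute $R_2:=\Rd{\delta\gamma_{R_1}}$, add the edge $((X_1,R_1),(X_2,R_2))$, and settle its weight by one further query $\delta\gamma_{R_1}\beq\gamma_{R_1}$: by property (3) of the relative branching norm, this decides whether $\bnorm{\delta}^{R_1}>0$.

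The principal obstacle is that the witnesses $\gamma_R$ built during saturation could grow as large as $|\mathcal{F}|$, which is exponential, so piping $\delta\gamma_{R_1}$ directly into the oracle would risk a double-exponential running time. The remedy is to exploit the structure of the algorithm of \cite{HeHuang2015}, which already performs an internal tabulation of all redundant sets of the fixed input system; our queries can be packaged so that only the pair $(\delta,R_1)$ is passed as additional data rather than a materialized long word, keeping each actual input to $\mathcal{O}_\beq$ of size polynomial in $|\Delta|$ and each oracle response within $\exp(|\Delta|)$. Counting at most $O(|\mathcal{V}|\cdot|\mathcal{F}|+|\mathcal{R}|\cdot|\mathcal{F}|)$ oracle calls in total, each costing $\exp(|\Delta|)$, the construction finishes within $\exp(|\Delta|)$, proving the proposition.
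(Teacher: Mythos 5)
Your overall plan coincides with the paper's: saturate the family of realizable redundant sets together with explicit witnesses $\gamma_R$ (the paper does this as a BFS tree rooted at $(\epsilon,\Rd{\epsilon})$, extending by single-variable prefixes exactly as you do, with soundness via Lemma \ref{lem:fu-lemma}), and then read off $V(\Delta)$, $E(\Delta)$ and $\mathcal{W}$ using the algorithm of \cite{HeHuang2015} as a black box. The one place where you go astray is the final paragraph. The ``principal obstacle'' you identify is not an obstacle: the witnesses $\gamma_R$ indeed have length up to $|\mathcal{F}|\leq 2^{|\mathcal{V}^0|}$, but Theorem \ref{thm-hh2015} gives a running time of $\mathrm{poly}(|\alpha|+|\beta|)\cdot\exp(|\Delta|)$, and a polynomial of a singly-exponential quantity is still singly exponential --- there is no risk of double-exponential blow-up from feeding $\delta\gamma_{R_1}$ directly to the oracle. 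Your proposed ``remedy,'' by contrast, is a genuine gap as written: it abandons the black-box use of the algorithm and asserts, without justification, that its internals can be repackaged to accept the pair $(\delta,R_1)$ in place of a materialized word. Nothing in the cited result licenses this, and the proposition does not need it; you should simply delete that paragraph and invoke the stated complexity bound. One further simplification worth noting: for the weights you query $\delta\gamma_{R_1}\beq\gamma_{R_1}$ and appeal to property (3) of the relative branching norm, which is correct, but the paper observes that this test is purely syntactic --- by Lemma \ref{lem:forcorrdsetqsat} one has $\bnorm{\delta}^{R_1}>0$ iff $\textsc{Var}(\delta)\not\subseteq R_1$ --- so no oracle call is needed at that stage. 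Finally, when several rules induce the same edge you must take the maximum of the resulting weights (the definition sets $\mathcal{W}(e)=1$ if \emph{some} witnessing $\delta$ has positive relative norm); your per-rule phrasing leaves this bookkeeping implicit.
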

\noindent Suppose $\Rd{\alpha}=R_0$ and there is an edge from $(X_0, R_0)$ to $(X_1, R_1)$
with weight $u$, then there is a sequence of actions $w\in \mathcal{A}^{*}$
s.t. $X_0\act{w}X_1\delta$ and $\bnorm{\delta}^{R_0}\geq u$. So a path
from $(X_0, R_0)$ in the following form
in $G(\Delta)$
\begin{equation}
  (X_0, R_0) \act{u_1} (X_1, R_1) \actx{u_2}\dots \actx{u_k} (X_k, R_k)\label{eq:3}
\end{equation}
indicates there are $w_1$, $w_2$, \dots $w_k$ and $\delta_1$, $\delta_2$, \dots, $\delta_k$ s.t.
$X_0\alpha$ can perform the sequence
\begin{equation}
X_0\alpha \act{w_1} X_1\delta_1\alpha \act{w_2}\dots \act{w_k}X_k\delta_k\delta_{k-1}\dots\delta_1\alpha\label{eq:5}
\end{equation}
with $\Rd{\alpha}=R_0$, $\Rd{\delta_i\delta_{i-1}\delta_1\alpha}=R_i$ and $\bnorm{\delta_i}^{R_{i-1}} \geq u_i$ for $1\leq i\leq k$. Now if
there exists $0 \leq i < k$ s.t.
$X_i= X_k$ and $R_i = R_k$ and $\sum_{j=i+1}^ku_j > 0$, then
we call \eqref{eq:3} a \emph{witness path} of irregularity in $G(\Delta)$ for $X_0\alpha$. Indeed for any $m>0$,
$X_0\alpha$ can reaches $X_i(\delta_k\dots\delta_{i+1})^{m}\delta_i\dots \delta_1\alpha$ by repeating
a subsequence of \eqref{eq:5} $m$ times and we have
\begin{equation}
  \label{eq:7}
  \bnorm{X_i(\delta_k\dots\delta_{i+1})^{m}\delta_i\dots\delta_1\alpha} \geq m\bnorm{\delta_k\dots\delta_{i+1}}^{R_i} = m \sum_{j=i+1}^k \bnorm{\delta_j}^{R_{j-1}} \geq m\sum_{j=i+1}^k u_j \geq m
\end{equation}
This implies $X_0\alpha$ is not \reg{$\beq$}.
The following lemma says that each normed $\alpha$ that is \emph{not} \reg{$\beq$} can be certified by
a witness path in $G(\Delta)$.

\begin{restatable}{lemma}{lemnonregcha}
\label{lem:non_reg_cha}
A process $X_1X_2\dots X_k$ is not \reg{$\beq$} iff there exist $1\leq i \leq k$ and a
witness path in $G(\Delta)$ for $X_iX_{i+1}\ldots X_k$.
\end{restatable}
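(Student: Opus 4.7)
The proof splits into the two directions of the biconditional.

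For the ``if'' direction, given a witness path for $X_iX_{i+1}\ldots X_k$ in $G(\Delta)$, I invoke the pumping argument already outlined at \eqref{eq:3}--\eqref{eq:7}. Iterating the positive-weight cycle of the path yields processes of the form $X_h(\delta_k\cdots\delta_{h+1})^m\delta_h\cdots\delta_1 X_{i+1}\ldots X_k$ reachable from $X_iX_{i+1}\ldots X_k$ for arbitrary $m$. Lemma~\ref{lem:fu-lemma} ensures that the redundant set of the growing tail is preserved along the iteration (so successive copies of $\delta_k\cdots\delta_{h+1}$ may be inserted into identical context up to $\Rd{\cdot}$), so \eqref{eq:7} applies and gives unbounded branching norm. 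Since normed-ness of $\Delta$ gives $X_1\ldots X_k \Act{} X_iX_{i+1}\ldots X_k$, the same unboundedness carries over to $X_1\ldots X_k$. By property (1) of relative branching norm together with the fact that a finite-state process admits only finitely many reachable processes up to $\beq$, this rules out bisimilarity with any finite-state process, so $X_1\ldots X_k$ is not \reg{$\beq$}.

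For the ``only if'' direction, assume $X_1\ldots X_k$ is not \reg{$\beq$}, so its reachable set has processes of unbounded branching norm. By the stack semantics of BPA, every reachable process decomposes uniquely as $\beta X_{j+1}\ldots X_k$ with $X_j \Act{} \beta$ as a standalone derivation, for some $1\leq j \leq k$. Using $\bnorm{\beta X_{j+1}\ldots X_k} = \bnorm{X_{j+1}\ldots X_k} + \bnorm{\beta}^{\Rd{X_{j+1}\ldots X_k}}$ and a pigeonhole on $j$, I fix an index $i$ for which $\{\bnorm{\beta}^R \mid X_i \Act{} \beta\}$ is unbounded, with $R = \Rd{X_{i+1}\ldots X_k}$. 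The witness path will be rooted at $(X_i, R)$.

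Given such a derivation $X_i \Act{} \beta$ with $\bnorm{\beta}^R$ large, I decompose it along its \emph{leftmost spine}: at step $j$ the current head $H_{j-1}$ is rewritten by a rule $H_{j-1} \to \sigma_j H_j \delta_j$ where $H_j$ is the leftmost variable of the right-hand side whose subtree does not collapse entirely to $\epsilon$ in the remainder of the derivation (so the prefix $\sigma_j$ consists exactly of those siblings that do). Each such step corresponds to an edge $(H_{j-1}, S_{j-1}) \to (H_j, S_j)$ of $G(\Delta)$ with $S_j = \Rd{\delta_j\cdots\delta_1 X_{i+1}\ldots X_k}$. Iterating property (2) of the relative branching norm yields the telescoping identity $\bnorm{\beta}^R = \bnorm{H_n}^{S_n} + \sum_{j=1}^n \bnorm{\delta_j}^{S_{j-1}}$, and since each $\bnorm{\delta_j}$ is bounded by the maximal rule size $K$, the path must contain at least $(\bnorm{\beta}^R - K)/K$ positive-weight edges.

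Once $\bnorm{\beta}^R$ exceeds $K(|V(\Delta)|+1)$, pigeonhole over the source vertices of these positive-weight edges yields two occurrences of the same $(H,S)$ with a weight-$1$ edge strictly between them; truncating the path at the second occurrence produces the required witness path rooted at $(X_i, R)$. The main technical obstacle is the leftmost-spine decomposition itself: I have to check that it really realizes a valid edge sequence in $G(\Delta)$ and that the telescoping identity for $\bnorm{\beta}^R$ holds as stated, both of which rest on careful bookkeeping of tail redundant sets, where Lemma~\ref{lem:fu-lemma} is used to freely substitute tails with equal redundant sets and thereby justify the identification $S_j = \Rd{\delta_j\gamma_{S_{j-1}}}$ demanded by the definition of $E(\Delta)$.
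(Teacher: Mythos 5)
Your proof is correct and follows essentially the same route as the paper's: the ``if'' direction is the pumping argument of \eqref{eq:3}--\eqref{eq:7}, and for the ``only if'' direction your leftmost-spine decomposition of a norm-increasing derivation coincides with the paper's decomposition by successive record minima of the stack height, after which both arguments telescope the relative branching norms, bound each increment by the rule size times $\norm{\Delta}$, and pigeonhole over the at most $|\mathcal{V}|\cdot 2^{|\mathcal{V}|}$ vertex labels to extract a positive-weight cycle. Your preliminary pigeonhole to isolate the suffix index $i$ is only a presentational variant of the paper's direct identification of the persistent suffix $X_{k-h_0+1}\dots X_k$ via the global stack minimum.
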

\noindent The proof idea essentially inherits from \cite{Fu2013BPA}.  We omit the detail here.
With the help of Lemma \ref{lem:non_reg_cha} we can prove the following theorem.

\begin{restatable}{theorem}{thmexpreg}
\label{thm:exp-reg}
Regularity checking w.r.t. $\beq$ on normed BPA is in EXPTIME.
\end{restatable}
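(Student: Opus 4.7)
The plan is to lift Lemma \ref{lem:non_reg_cha} into a direct algorithm: given $\alpha = X_1 X_2 \ldots X_k$, I would decide \reg{$\beq$} by searching $G(\Delta)$ for a witness path starting from the vertex associated with each suffix $X_i \ldots X_k$. Proposition \ref{prp:exp_G} already constructs $G(\Delta)$ in exponential time, and $G(\Delta)$ itself has exponentially many vertices and edges, so any search that runs in time polynomial in $|G(\Delta)|$ fits inside the EXPTIME budget.

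First I would construct $G(\Delta)$ via Proposition \ref{prp:exp_G}, and during the same pass tabulate, for every variable $X$ and every realised redundant set $R \in V(\Delta)$, the value $\Rd{X\gamma_R}$; Lemma \ref{lem:fu-lemma} guarantees that this is a well-defined function of $(X,R)$. Second, for the input $\alpha$ I would inductively compute the starting redundant sets $R_i := \Rd{X_{i+1} \ldots X_k}$ using $R_k := \Rd{\epsilon}$ and $R_{i-1} := \Rd{X_i \gamma_{R_i}}$, each step amounting to one lookup in the above table; this identifies the candidate vertex $(X_i, R_i) \in V(\Delta)$ for every $1 \leq i \leq k$. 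Third, I would preprocess $G(\Delta)$ by computing its strongly connected component decomposition, flagging every SCC that contains an edge of weight $1$ whose endpoints both lie inside the SCC, and then marking a vertex \emph{bad} iff the quotient DAG leads from it to a flagged SCC. By Lemma \ref{lem:non_reg_cha}, $\alpha$ is \reg{$\beq$} iff none of the vertices $(X_1, R_1), \ldots, (X_k, R_k)$ is bad.

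The only genuinely non-mechanical step is the equivalence between ``a witness path from $(X_i, R_i)$ exists'' and ``$(X_i, R_i)$ is bad''. In one direction, a witness path is by definition a path that revisits some vertex $(X,R)$ with the weight sum strictly positive between the two visits; hence the corresponding cycle, together with at least one weight-$1$ edge on it, lies entirely within the SCC of $(X,R)$, so that SCC is flagged and reachable from $(X_i, R_i)$. Conversely, a flagged SCC reachable from $(X_i, R_i)$ can be unfolded into a path that first reaches the SCC, then cycles through its internal weight-$1$ edge back to the entry vertex, and this yields a witness path of the required shape. Since SCC decomposition, flagging and DAG reachability are all polynomial in $|G(\Delta)|$, the whole procedure runs in $\exp(|\Delta|)$ time, completing the argument.
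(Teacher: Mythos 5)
Your proposal is correct and follows essentially the same route as the paper: build $G(\Delta)$ via Proposition \ref{prp:exp_G}, compute the redundant sets of the suffixes of $\alpha$, and reduce irregularity (via Lemma \ref{lem:non_reg_cha}) to reachability of a positive-weight cycle from one of the suffix vertices. The only cosmetic differences are that you detect such cycles by an SCC decomposition where the paper uses a direct reachability computation on weight-$0$ paths followed by a weight-$1$ edge, and that you obtain the suffix redundant sets by table lookups rather than by re-running the branching-bisimilarity checker; both variants stay within the $\exp(|\Delta|)$ budget.
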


\section{Conclusion}\label{sec:conclusion}
The initial motivation of this paper is to finish the EXPTIME-completeness~\cite{HeHuang2015} of branching
bisimilarity of normed  BPA. 
The new reduction technique involve the inner structure of BPA w.r.t. branching bisimulation. 
The PSPACE-hard lower bound for
regularity checking is a byproduct once we developed the technique. 
Whether it has a PSPACE algorithm for branching regularity on normed
BPA is a natural further question. We believe the answer is positive.



\newpage
\appendix

\section{Proof for Section \ref{sec:pre}}
\lemrnome*
\begin{proof}
Suppose $\bnorm{\alpha}^{\gamma_1}=k_1$ and $\bnorm{\alpha}^{\gamma_2}=k_2$, then there are two
transition sequence
\begin{eqnarray}
  \alpha\gamma_1 \Act{\beq} \alpha_1'\gamma_1 \act{\jmath_1} \alpha_1\gamma_1 \Act{\beq}\alpha_2'\gamma_1\act{\jmath_2}\alpha_2\gamma_1 \dots \Act{\beq}\alpha_{k_1}'\gamma_1\act{\jmath_{k_1}}\alpha_{k_1}\gamma_1\Act{\beq}\gamma_1  \label{eq:rnorm1} \\
  \alpha\gamma_2 \Act{\beq} \beta_1'\gamma_2 \act{\jmath_1} \beta_1\gamma_2 \Act{\beq}\beta_2'\gamma_2\act{\jmath_2}\beta_2\gamma_2 \dots \Act{\beq}\beta_{k_2}'\gamma_2\act{\jmath_{k_2}}\beta_{k_2}\gamma_2\Act{\beq}\gamma_2  \label{eq:rnorm2}
\end{eqnarray}
By Lemma \ref{lem:fu-lemma}, if we substitute $\gamma_1$ with $\gamma_2$ we will get a transition sequence that lead
$\alpha\gamma_2$ to $\gamma_2$ with $k_1$ state-change actions. It follows that
$k_2 \leq k_1$.  Similarly we have $k_1 \leq k_2$. It follows that
$\bnorm{\alpha}^{\gamma_1} = \bnorm{\alpha}^{\gamma_2}$.
\end{proof}
\section{Proofs for Section \ref{sec:eq_lowerbound}}
\subsection{Proof of Lemma \ref{lem:bit_test}}
\lembittest*
\begin{proof}
  \begin{enumerate}
  \item (``$\Leftarrow$'') We show by induction on $|\alpha_1|$ that if
    $B_i^{1-b}\not\in \textsc{Var}(\alpha_1)$, then $Z_i^b\alpha_1B_i^b\beq \alpha_1B_i^b$. By
    congruence we have $Z_i^b\alpha_1B_i^b\alpha_2 \beq \alpha_1B_i^b\alpha_2$.
\begin{itemize}
\item $|\alpha_1| = 0$. It is routine to verity that the relation
  $\{(Z_i^bB_i^b, B_i^b), (B_i^b, Z_i^bB_i^b)  \}\cup \beq$ is a branching bisimulation.
\item Suppose $|\alpha_1| = k+1$ and
  $B_i^{1-b}\not\in\textsc{Var}(\alpha_1)$. Let $\alpha_1 = B_j^{b'}\alpha_1'$. If $j=i$, we
  must have $b'=b$. By the base case and congruence we have 
  $Z_i^bB_j^{b'}\alpha_1'\beq B_j^{b'}\alpha_1'$. If $j\neq i$, we show that Defender has a
  w.s. in the branching bisimulation game
  $(Z_i^bB_j^{b'}\alpha_1'B_i^b,B_j^{b'}\alpha_1'B_i^b)$. If Attacker play
  $B_j^{b'}\alpha_1'B_i^b\act{\lambda}\beta$, then Defender responds with
  $Z_i^bB_j^{b'}\alpha_1'B_i^b \act{}B_j^{b'}\alpha_1'B_i^b \act{\lambda}\beta$. The configuration
  of the next round will be a pair of syntax identical processes and Defender wins. If Attacker play
  $Z_i^bB_j^{b'}\alpha_1'B_i^b \act{}B_j^{b'}\alpha_1'B_i^b$, then Defender responds with empty
  transition and the configuration of the next round is
  $(B_j^{b'}\alpha_1'B_i^b, B_j^{b'}\alpha_1'B_i^b)$. Defender also
  wins. Attacker's optimal choice is to play
  $Z_i^bB_j^{b'}\alpha_1'B_i^b \act{a_i^b}B_j^{b'}\alpha_1'B_i^b$, Defender then responds with
  $B_j^{b'}\alpha_1'B_i^b \act{a_i^b}B_j^{b'}(i,b)\alpha_1'B_i^b$ and the game continues from
  $(B_j^{b'}\alpha_1'B_i^b, B_j^{b'}(i,b)\alpha_1'B_i^b)$. Now if Attacker chooses to play an
  action $a_j^{b'}$, then Defender simply follows the suit the game configuration dose not
  change. If Attacker chooses to play an action $a_{j'}^{b''}$ for some $j'\neq j$ and $b''\in \{0,1\}$, then after Defender responds the game
  reaches configuration $(B_j^{b'}(j',b'')\alpha_1'B_i^b, B_j^{b'}(j',b'')\alpha_1'B_i^b)$. Attacker's
  optimal choice is to play the action $d$ and after Defender responds the game reaches
  configuration $(Z_i^b\alpha_1'B_i^b,\alpha_1'B_i^b)$.  Now by I.H. we have
  $Z_i^b\alpha_1'B_i^b\beq\alpha_1'B_i^b $. Defender has a w.s. afterward.
\end{itemize}

(``$\Rightarrow$'') Clearly $Z_i^b \not\beq \epsilon$ and for all $\beta$ we have
$Z_i^bB_i^{1-b}\beta\not\beq B_i^{1-b}\beta$. We show that if there is no $\alpha_1$ and
$\alpha_2$ s.t. $B_i^{1-b}\not\in \textsc{Var}(\alpha_1)$ and $\alpha= \alpha_1B_i^b\alpha_2$,
then Attacker has a w.s. in the branching bisimulation game of $(Z_i^b\alpha, \alpha)$.  Let
$\alpha = B_j^{b'}\alpha'$, then by assumption we have $j\neq i$ or $b\neq b'$. If $j=i$, then
it necessary hold that $b'=1-b$. We are done as
$Z_i^bB_i^{1-b}\alpha'\not\beq B_i^{1-b}\alpha'$. If $j\neq i$, Attacker then play
$Z_i^bB_j^{b'}\alpha\act{a_i^b}B_j^{b'}\alpha'$. Defender has to responds with
$B_j^{b'}\alpha'\act{a_i^b}B_j^{b'}(i,b)\alpha'$ and the game configuration becomes
$(B_j^{b'}\alpha', B_j^{b'}(i,b)\alpha')$. Attacker then play
$B_j^{b'}(i,b)\alpha'\act{d}Z_i^b\alpha'$, Defender has to responds with
$B_j^{b'}\alpha'\act{d}\alpha'$ and game goes to configuration $(Z_i^b\alpha',\alpha')$. If
$B_i^{1-b}\in \textsc{Var}(\alpha)$, Attacker can repeat this strategy until a configuration
of the form $(Z_i^bB_i^{1-b}\beta , B_i^{1-b}\beta)$
is reached; otherwise a configuration of the form $(Z_i^b,\epsilon)$
will be reached. Attacker has a w.s. afterward.

\item By (1) , there is $\alpha_1$ and $\alpha_2$
s.t. $\alpha=\alpha_1B_i^b\alpha_2$ and $B_i^{1-b}\not\in\textsc{Var}(\alpha_1)$. Now suppose
otherwise we also have $Z_i^{1-b}\alpha\beq\alpha$. By (1) again, there are
$\alpha_1'$ and $\alpha_2'$ s.t. $\alpha=\alpha_1'B_i^{1-b}\alpha_2'$ and
$B_i^b\not\in\textsc{Var}(\alpha_1')$. Contradiction.
\end{enumerate}
\end{proof}

\subsection{Proof of Lemma \ref{lem:bit_test_weak}}
\lembittestweak*
\begin{proof}
  It is sufficient to show that $Z_i^b\alpha\not\beq\alpha$ then
  Attacker has a w.s. in the weak bisimulation game of $(Z_i^b\alpha,\alpha)$. By Lemma
  \ref{lem:bit_test} there is no $\alpha_1$ and $\alpha_2$
  s.t. $B_i^{1-b}\not\in \textsc{Var}(\alpha_1)$ and $\alpha= \alpha_1B_i^b\alpha_2$. Let
  $\alpha = B_j^{b'}\alpha'$, then by assumption we have $j\neq i$ or $b\neq b'$. If $j=i$, then
  it necessary hold that $b'=1-b$. We are done as
  $Z_i^bB_i^{1-b}\alpha'\not\weq B_i^{1-b}\alpha'$. If $j\neq i$, Attacker then play
  $Z_i^bB_j^{b'}\alpha\act{a_i^b}B_j^{b'}\alpha'$, Defender has to responds with
  $B_j^{b'}\alpha'\act{a_i^b}B_j^{b'}(i,b)\alpha'$ and the game configuration becomes
  $(B_j^{b'}\alpha', B_j^{b'}(i,b)\alpha')$. Attacker then play
  $B_j^{b'}(i,b)\alpha'\act{d}Z_i^b\alpha'$, Defender has to responds with
  $B_j^{b'}\alpha'\act{d}\alpha'$ and game goes to configuration $(Z_i^b\alpha',\alpha')$. If
  $B_i^{1-b}\in \textsc{Var}(\alpha)$, Attacker can repeat  this strategy until a
  configuration of the form $(Z_i^bB_i^{1-b}\beta , B_i^{1-b}\beta)$ is reached; otherwise a configuration of the form
  $(Z_i^b, \epsilon)$ will be reached. Attacker then has a w.s. afterward.
\end{proof}

\subsection{Proof of Lemma \ref{cor:bits_rep}}
\begin{lemma}[Computation Lemma] 
\begin{itemize}
\item []
\item If $\alpha\act{}\alpha_1\act{}\alpha_2 \dots \act{}\alpha_k$ and $\alpha\beq\alpha_k$, then for
  all $1\leq i \leq k$ we have $\alpha\beq\alpha_i$;
\item If $\alpha\act{}\alpha_1\act{}\alpha_2 \dots \act{}\alpha_k$ and $\alpha\weq\alpha_k$, then for
  all $1\leq i \leq k$ we have $\alpha\weq\alpha_i$;
\end{itemize}
\end{lemma}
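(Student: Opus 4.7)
My plan is to prove both statements by showing that an extension of $\beq$ (resp.\ $\weq$) incorporating the intermediate states of the given $\tau$-chain is itself a branching (resp.\ weak) bisimulation, and then invoking the maximality of $\beq$ (resp.\ $\weq$).

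For the branching case, write the given chain as $\alpha = \alpha_0 \act{} \alpha_1 \act{} \ldots \act{} \alpha_k$ with $\alpha \beq \alpha_k$, and define
\[
R \;=\; \beq \;\cup\; \{(\alpha_i, \alpha_j),\, (\alpha_j, \alpha_i) : 0 \leq i, j \leq k\}.
\]
I would verify $R$ is a branching bisimulation, whence $R \subseteq \beq$ and in particular $\alpha \beq \alpha_i$ for every $i$. For a pair $(\alpha_i, \alpha_j) \notin \beq$ and an Attacker move $\alpha_i \act{\lambda} \gamma$, there are two cases. If $\lambda = \tau$ and $\gamma = \alpha_{i+1}$ is a chain step, Defender plays the empty response and the new pair $(\alpha_{i+1}, \alpha_j)$ is in $R$. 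Otherwise $\gamma$ is a ``side'' move, and Defender exploits $\alpha_j \Act{} \alpha_k$ (via the remaining chain) together with $\alpha \beq \alpha_k$: simulating Attacker's play $\alpha \act{} \ldots \act{} \alpha_i \act{\lambda} \gamma$ inside the game $(\alpha, \alpha_k)$ forces Defender (who has a winning strategy there) to commit to a partner $\delta''$ with $\alpha_i \beq \delta''$ and a follow-up $\delta'$ with $\gamma \beq \delta'$, reached by $\alpha_k \Act{} \delta'' \act{\lambda} \delta'$. Defender in the $(\alpha_i, \alpha_j)$ game then plays $\alpha_j \Act{} \delta'' \act{\lambda} \delta'$ (first travelling $\alpha_j \Act{} \alpha_k$ along the chain, then continuing along Defender's extracted response), which is a valid branching-bisimulation response since $(\alpha_i, \delta'')$ and $(\gamma, \delta')$ both lie in $\beq \subseteq R$.

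The weak bisimulation case is handled analogously with the more permissive matching clause $\alpha_k \Act{} \gamma \act{\lambda} \gamma' \Act{} \beta'$; the extra $\tau$-closures absorb the chain directly and remove the need to track the branching-style intermediate partner, so the argument becomes noticeably smoother.

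The main obstacle, as I expect, is the formal partner-extraction step in the branching case. Tracing Defender's winning strategy in $(\alpha, \alpha_k)$ along the chain $\alpha \act{} \alpha_1 \act{} \ldots \act{} \alpha_i$ must account for the possibility that Defender responds non-trivially at intermediate rounds, each time producing a fresh $\tau$-segment on the $\alpha_k$-side that must be glued into a single $\tau$-path $\alpha_k \Act{} \delta''$. Attacker must consistently choose the continuation configuration $(\alpha_{t+1}, \beta_t')$ rather than the alternative $(\alpha_t, \beta_t'')$ in order to keep progressing along the original chain, and one must verify that this bookkeeping composes into a coherent partner sequence. Packaging the argument without hidden circularity typically calls for a well-founded induction on a measure such as the combined length of the original chain and Defender's committed response — choosing this measure carefully is the principal technical hurdle.
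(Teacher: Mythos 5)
The paper does not actually prove this statement: it imports the Computation Lemma as a known result (it is the ``stuttering lemma'' of van Glabbeek--Weijland, used by Fu in \cite{Fu2013BPA}), so there is no in-paper argument to compare against. Judged on its own, your proposal follows the standard route --- extend $\beq$ by the chain pairs and verify the extended relation is a branching bisimulation --- but the verification as you describe it has a genuine gap, and it is not the one you flag. The partner extraction along the chain is actually unproblematic: since $\beq$ is itself a branching bisimulation, a direct induction on $t$ yields $\beta_t$ with $\alpha_k \Act{} \beta_t$ and $\alpha_t \beq \beta_t$, with no winning-strategy bookkeeping needed.

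The real problem is your ``side move'' case when $\lambda = \tau$. You assert that Defender can always extract a response of the shape $\alpha_k \Act{} \delta'' \act{\lambda} \delta'$ with $\alpha_i \beq \delta''$ and $\gamma \beq \delta'$. But when $\alpha_i \act{\tau} \gamma$ is matched against the traced partner $\beta_i$, Definition~\ref{def:beq} permits the first clause: $\gamma \beq \beta_i$ with \emph{no} transition on the other side --- and if $\beta_i$ happens to have no outgoing $\tau$-moves at all (e.g.\ a $\tau$-free process branching bisimilar to $\alpha_i$), this is the \emph{only} possible match, so no $\delta'' \act{\tau} \delta'$ exists to compose with. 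In that situation you must discharge the pair $(\alpha_i,\alpha_j)$ via the first clause, i.e.\ show $(\gamma,\alpha_j) \in R$; but all you know is $\gamma \beq \alpha_i$ and $(\alpha_i,\alpha_j) \in R$, and your $R = \beq \cup \{(\alpha_s,\alpha_t)\}$ is not closed under composition with $\beq$, so $(\gamma,\alpha_j) \in R$ does not follow (you may not appeal to $\alpha_i \beq \alpha_j$, which is the statement being proved). This is precisely the subtlety for which the literature introduces \emph{semi-branching} bisimulations: relax the $\tau$-clause to ``$\beta \Act{} \beta'$ with $\alpha \mathrel{R} \beta'$ and $\alpha' \mathrel{R} \beta'$'', verify your $R$ is a semi-branching bisimulation (the problematic case is then answered by $\alpha_j \Act{} \alpha_k \Act{} \beta_i$ with $\alpha_i \beq \beta_i$ and $\gamma \beq \beta_i$), and invoke the theorem that the largest semi-branching bisimulation coincides with $\beq$. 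Alternatively, close $R$ under $\beq$ on both sides and redo the case analysis. A cognate issue arises in your weak case for the pair $(\alpha_k,\alpha_j)$ when $\alpha_k$ has no $\tau$-moves. Without one of these repairs the proof does not go through as written.
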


\begin{lemma}[Lemma 5 of \cite{Fu2013BPA}]\label{lem:forcorrdsetqsat}
  Let $X_1X_2\dots X_k\alpha$ be a normed BPA process, then the following statements are valid
  \begin{itemize}
  \item $X_1X_2\dots X_k\alpha\beq \alpha$ iff $X_i\alpha\beq \alpha$ for all $1\leq i \leq k$.
  \item $X_1X_2\dots X_k\alpha\weq \alpha$ iff $X_i\alpha\weq \alpha$ for all $1\leq i \leq k$.
  \end{itemize}
\end{lemma}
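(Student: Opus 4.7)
The plan is to prove both bullets together; the arguments for $\beq$ and $\weq$ run in parallel, so I sketch the branching case in detail and indicate the weak analogue at the end.

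$(\Leftarrow)$ \emph{Induction on $k$.} The base $k=1$ is tautological. For the step, the inductive hypothesis gives $X_2\cdots X_k\alpha\beq\alpha$, whence $\Rd{X_2\cdots X_k\alpha}=\Rd{\alpha}$; applying Lemma~\ref{lem:fu-lemma} with $\gamma_1=X_1$ and $\gamma_2=\epsilon$ turns the premise $X_1\alpha\beq\alpha$ into $X_1X_2\cdots X_k\alpha\beq X_2\cdots X_k\alpha$, which chained with the IH yields $X_1\cdots X_k\alpha\beq\alpha$.

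$(\Rightarrow)$ \emph{The substantive direction.} Suppose $X_1\cdots X_k\alpha\beq\alpha$. By property~(1) of branching norm, $\bnorm{X_1\cdots X_k\alpha}=\bnorm{\alpha}$; combined with additivity (property~(2)), $\bnorm{X_1\cdots X_k}^{\Rd{\alpha}}=0$. By Lemma~\ref{lem:rnorm} this coincides with $\bnorm{X_1\cdots X_k}^{\alpha}=0$, and the zero-state-change instance of the definition of relative branching norm supplies a silent sequence $X_1X_2\cdots X_k\alpha\Act{}\alpha$ along which every intermediate configuration is $\beq\alpha$ (by the Computation Lemma). The pivotal structural observation is that BPA's LIFO stack semantics forces this silent sequence to visit each configuration $X_jX_{j+1}\cdots X_k\alpha$ for $1\le j\le k+1$: a $\tau$-transition can only act on the leftmost variable of the stack, so $X_j$ cannot be touched before everything above it has been consumed, and the first moment $X_j$ becomes leftmost the process is exactly $X_jX_{j+1}\cdots X_k\alpha$. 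Hence $X_jX_{j+1}\cdots X_k\alpha\beq\alpha$ for every such $j$. To conclude, fix any $i\in\{1,\dots,k\}$. The tail $X_{i+1}\cdots X_k\alpha\beq\alpha$ yields $\Rd{X_{i+1}\cdots X_k\alpha}=\Rd{\alpha}$, and a second application of Lemma~\ref{lem:fu-lemma} (with $\gamma_1=X_i$, $\gamma_2=\epsilon$) transports $X_iX_{i+1}\cdots X_k\alpha\beq X_{i+1}\cdots X_k\alpha$ — which holds since both sides are $\beq\alpha$ — into $X_i\alpha\beq\alpha$. The $\weq$-variant is obtained by replaying this argument with the weak-norm counterparts of properties~(1)–(2) and the Computation Lemma for $\weq$.

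\emph{Main obstacle.} The only step that is not routine bookkeeping is the LIFO passing-through claim: showing, by induction on the length of the silent reduction, that the suffix $X_j\cdots X_k\alpha$ is literally preserved until the stack portion above it has been silently reduced to $\epsilon$. This is the sole place where the specific shape of BPA rules (as opposed to a general transition system) is exploited; once this structural invariant is in hand, everything else is a mechanical chain of Lemma~\ref{lem:fu-lemma} applications combined with norm additivity.
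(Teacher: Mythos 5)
Your argument is essentially correct, but note that the paper itself gives \emph{no} proof of this statement: it is imported verbatim as Lemma~5 of \cite{Fu2013BPA}, with only the remark that it is ``a simple consequence of Computation Lemma'' and that Fu's branching-bisimilarity proof adapts to $\weq$. Your reconstruction is exactly the argument that remark alludes to: derive $\bnorm{X_1\cdots X_k}^{\Rd{\alpha}}=0$ from invariance and additivity of the branching norm, extract a purely silent reduction $X_1\cdots X_k\alpha\Act{}\alpha$, use the LIFO shape of BPA transitions to see that this reduction passes through every suffix $X_j\cdots X_k\alpha$, and invoke the Computation Lemma. Two small points are worth tightening. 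First, your final step does not need Lemma~\ref{lem:fu-lemma} at all: from $X_{i+1}\cdots X_k\alpha\beq\alpha$, left-congruence gives $X_iX_{i+1}\cdots X_k\alpha\beq X_i\alpha$, and transitivity with $X_iX_{i+1}\cdots X_k\alpha\beq\alpha$ yields $X_i\alpha\beq\alpha$; the same simplification applies to your $(\Leftarrow)$ induction. This matters for the $\weq$ bullet, because Lemma~\ref{lem:fu-lemma} is stated only for $\beq$ and the paper offers no weak analogue, whereas congruence of $\weq$ is available. Second, the ``weak-norm counterparts of properties (1)--(2)'' you invoke are nowhere stated in the paper; they do hold (take the weak norm to be the minimal number of \emph{visible} actions needed to reach $\epsilon$, which is additive under concatenation and $\weq$-invariant on normed processes since a weak bisimulation must match each visible action by exactly one visible action), but this deserves a sentence of justification rather than an appeal to unstated facts.
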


\begin{remark}
Lemma \ref{lem:forcorrdsetqsat} was first noticed by Fu in \cite{Fu2013BPA} as a simple
consequence of Computation Lemma. Although Fu's proof only deals with branching bisimilarity, it
can be adapted to weak bisimilarity easily.
\end{remark}

\corbitsrep*
\begin{proof}
By Lemma \ref{lem:forcorrdsetqsat} and Lemma \ref{lem:bit_test_weak} $\beta\alpha\beq \alpha$
iff $\beta\alpha\weq\alpha$. As a result we only need to prove (1). Suppose
$\beta\alpha\beq\alpha$, by Lemma \ref{lem:forcorrdsetqsat}, for each $X\in \textsc{Var}(\beta)$,
$X\alpha\beq \alpha$. By Proposition
\ref{prop:binary_rdset}, we have $\Rd{\alpha}=\{Z_1^{b_1}, Z_2^{b_2}, \dots, Z_n^{b_n}\}$.
It follows that $\textsc{Var}(\beta)\subseteq \{Z_1^{b_1}, Z_2^{b_2}, \dots, Z_n^{b_n}\}$. Now suppose
$\textsc{Var}(\beta) \subseteq \{Z_1^{b_1}, Z_2^{b_2}, \dots, Z_n^{b_n}\}$, then
$\textsc{Var}(\beta) \subseteq \Rd{\alpha}$. By congruence we have $\beta\alpha\beq \alpha$.
\end{proof}

\section{Proofs for Section \ref{sec:reg_lowerbound}}
\subsection{Proof of Theorem \ref{thm:reg_pspace}}
\thmregpspace*
\begin{proof}
Given a QSAT formular $\mathcal{F}$, we construct a normed BPA system $\Delta$ and two process
$\alpha$ and $\beta$ as Proposition \ref{prp:qsat} dose, then
\begin{equation}
  \label{eq:11}
  \mathcal{F} \textrm{ is true } \iff \alpha\beq \beta \iff \alpha\weq \beta \iff \alpha \asymp \beta
\end{equation}
As $\alpha$ and $\beta$ are both \reg{$\beq$}. We use the construction in the proof of Theorem
\ref{srba-thm} to get another normed BPA system $\Delta'$ and a process $\gamma$, then
\begin{itemize}
\item $\alpha\asymp\beta \implies \alpha\beq\beta\implies \gamma \textrm{ is \reg{$\beq$}}
  \implies \gamma \textrm{ is \reg{$\asymp$}}$;
\item $\alpha\not\asymp\beta \implies \alpha\not\weq\beta\implies \gamma \textrm{ is not  \reg{$\weq$}}
    \implies \gamma \textrm{ is not \reg{$\asymp$}}$.
  \end{itemize}

\noindent It follows that $\mathcal{F}$ is true iff $\gamma$ is \reg{$\asymp$}.
\end{proof}

\subsection{Proof of Lemma \ref{lem:rdset_qsat}}
\corrdsetqsat*
\begin{proof}
  By Lemma \ref{lem:forcorrdsetqsat}, $Z_1^1Z_2^1\dots Z_n^1\gamma\beq \gamma$ iff
  $Z_i^1\gamma\beq \gamma$ for $1\leq i \leq n$ and $Z_1^1Z_2^1\dots Z_n^1\gamma\weq \gamma$ iff
  $Z_i^1\gamma\weq \gamma$ for $1\leq i \leq n$. By Lemma \ref{lem:bit_test_weak}, (1) and (2)
  are equivalent. We only need to prove (1) and (3) are equivalent.
  \begin{itemize}
  \item ``$(1)\Rightarrow(3)$''. By Lemma \ref{lem:bit_test}, $Z_i^1\gamma\beq \gamma$ implies
    $B_i^1\in \textsc{Var}(\gamma)$. By Lemma \ref{lem:forcorrdsetqsat} and assumption we have
    $\textsc{Var}({\gamma})=\{B_1^1,B_2^1,\dots, B_n^1\}$.
  \item ``$(3)\Rightarrow(1)$''. As $\gamma$ contains $B_i^0$, by Lemma \ref{lem:bit_test}
    and assumption, for all $1\leq i \leq n$,
    $Z_i^1\gamma\beq \gamma$. $Z_1^1Z_2^1\dots Z_n^1\gamma\beq \gamma$ follows by congruence.
  \end{itemize}
\end{proof}

\section{Proofs for Section \ref{sec:exptime_upper_bound}}
\subsection{Proof of Proposition \ref{prp:exp_G}}
Let us first recall the main theorem proved by He and Huang.
\begin{theorem}[\cite{HeHuang2015}]\label{thm-hh2015}
  Given a normed BPA system $\Delta=(\mathcal{V}, \mathcal{A}, \mathcal{R})$ and two processes
  $\alpha$ and $\beta$, there is an algorithm that runs in
  $\mathrm{poly}(|\alpha|+|\beta|)\cdot\exp(|\Delta|)$ time to decide if $\alpha\beq \beta$.
\end{theorem}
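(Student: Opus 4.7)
The statement is a direct recall of the main theorem of \cite{HeHuang2015}, so a self-contained proof in this paper would be out of place; what follows is a sketch of how one would recover it from the ingredients already introduced in Section~\ref{sec:pre}. The approach rests on two structural observations. First, by Lemma~\ref{lem:fu-lemma}, whether $\gamma_1\alpha\beq\gamma_2\alpha$ depends on $\alpha$ only through $\Rd{\alpha}$, so the relevant information about any suffix is already captured by its redundant set; since $\Rd{\alpha}\subseteq\mathcal{V}$, there are at most $2^{|\mathcal{V}|}$ such sets. Second, $\beq$ is a congruence, so bisimilarity of two arbitrary processes can be read off from a finite ``base table'' of judgments about heads composed with canonical suffixes indexed by redundant sets. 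The plan is to precompute this base table in $\exp(|\Delta|)$ time and then use it to answer each query $\alpha\beq\beta$ in time polynomial in $|\alpha|+|\beta|$.

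To build the table, I would run a greatest-fixpoint iteration over the finite lattice of relations on triples $(X,Y,R)\in\mathcal{V}\times\mathcal{V}\times 2^{\mathcal{V}}$, starting from the full relation and removing a triple $(X,Y,R)$ whenever the branching-bisimulation game condition at $(X\gamma_R,Y\gamma_R)$ fails when the current table is used as an oracle for outcomes a bounded number of transitions further down the game tree. The lookahead needed at each refinement step is determined by the rules of $\Delta$ and hence has size polynomial in $|\Delta|$; monotonicity guarantees termination in at most $\exp(|\Delta|)$ rounds. Alongside the equivalences I would store the relative branching norms $\bnorm{X}^R$ (possibly truncated at a threshold polynomial in $|\Delta|$, which normedness permits) so that the decomposition step of the query algorithm can recurse on structurally smaller data.

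The query phase answers $X_1\cdots X_k\beq Y_1\cdots Y_\ell$ by ``peeling'' head variables and repeatedly consulting the table, using Lemma~\ref{lem:fu-lemma} to replace each suffix by the canonical $\gamma_R$ with $R$ equal to its redundant set. Computing the redundant set of a given suffix is itself a table lookup since $\Rd{\sigma\gamma_R}$ depends only on $\sigma$ and $R$. The hard part, and the real technical content of \cite{HeHuang2015}, is the polynomial bound on this decomposition: a priori the game between two arbitrary processes could traverse exponentially many configurations, and ruling this out requires showing that Attacker's optimal responses can be localised to a small collection of ``critical'' configurations fully determined by the head variables and the redundant set of the residual suffix. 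Once that localisation argument is in place, combining the exponential-time table construction with the polynomial-time query procedure yields the stated $\mathrm{poly}(|\alpha|+|\beta|)\cdot\exp(|\Delta|)$ bound.
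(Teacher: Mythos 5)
The paper does not prove this theorem at all: it is quoted verbatim from \cite{HeHuang2015} and used strictly as a black box (in the proof of Proposition~\ref{prp:exp_G} and hence of Theorem~\ref{thm:exp-reg}). You correctly recognize this, and the appropriate treatment here is indeed to cite rather than to reprove. So there is no in-paper argument to compare yours against.

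That said, your sketch should not be mistaken for a proof, even a proof sketch that could be expanded within this paper. The two load-bearing claims are exactly the technical content of \cite{HeHuang2015}, and you defer both. First, the soundness and completeness of the greatest-fixpoint refinement over triples $(X,Y,R)$ --- that is, that $X\gamma_R \beq Y\gamma_R$ is genuinely characterized by a bounded-lookahead game condition evaluated against the current table --- is asserted without argument; this is where branching bisimilarity is delicate, since the $\Act{}$ prefix in Definition~\ref{def:beq} is not a priori bounded. Second, the polynomial bound on the query phase (your ``localisation to critical configurations'') is named as the hard part and then simply assumed. There is also a smaller gap you gloss over: the table is indexed by redundant sets $R$, but not every $R\subseteq\mathcal{V}$ is realized as $\Rd{\gamma}$ for some $\gamma$, and determining which sets are realizable, together with canonical representatives $\gamma_R$, is itself part of the computation (compare the tree construction in the proof of Proposition~\ref{prp:exp_G}). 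As a high-level summary of how such algorithms work your account is reasonable; as a proof it is incomplete, and the honest course --- which both you and the paper take --- is to lean on the citation.
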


\prpexpG*
\begin{proof}
  Given a normed BPA system $\Delta=(\mathcal{V},\mathcal{A},\mathcal{R})$, let
  $\mathcal{V}=\{X_1, X_2, \dots, X_n\}$ and $\mathcal{V}^0\subseteq \mathcal{V}$ be the set of
  variables that can reach $\epsilon$ via internal actions alone.
  We first construct a tree $\mathcal{T}$ of size $\exp(|\Delta|)$. The root of the tree is
  $(\epsilon, \Rd{\epsilon})$, and each node on $\mathcal{T}$ is of the form $(\alpha,\Rd{\alpha})$.
  $\mathcal{T}$ is constructed in a BFS way as follows. We first compute
  $(\epsilon,\Rd{\epsilon})$. If there is a node $(\alpha,\Rd{\alpha}) \in \mathcal{T}$ that is unmarked,
  we add the nodes $(X_1\alpha, \Rd{X_1\alpha})$, $(X_2\alpha, \Rd{X_2\alpha})$ \dots
  $(X_n\alpha, \Rd{X_n\alpha})$ to $\mathcal{T}$ as the children of $(\alpha, \Rd{\alpha})$ and
  then mark $(\alpha,\Rd{\alpha})$ as ``processed''.
  Now for each new added node $(X_i\alpha, \Rd{X_i\alpha})$, if there is some node $(\beta,\Rd{\beta})\in\mathcal{T}$ that has been
  marked as ``processed'' and $\Rd{\beta} = \Rd{X_i\alpha}$, we then mark
  $(X_i\alpha, \Rd{X_i\alpha})$ as a ``leaf''. The construction $\mathcal{T}$ stops if all nodes are
  marked as either ``processed'' or ``leaf''.
  Clearly there are at most $\exp(|\Delta|)$ number of nodes. And the size of each node is bounded by $\exp(|\Delta|)$.
  By Theorem \ref{thm-hh2015}, we can compute $\Rd{\alpha}$ in
  $|\mathcal{V}^0|\cdot\mathrm{poly}(|\alpha|)\cdot\exp(|\Delta|)$ time for a given process $\alpha$.
  It follows that we can construct $\mathcal{T}$ in $\exp(|\Delta|)$ time.

  We then compute $V(\Delta)$, $E(\Delta)$ and
  $\mathcal{W}: E(\Delta) \rightarrow \{0,1\}$ from $\mathcal{T}$ as follows.
  \begin{enumerate}
  \item  By Lemma \ref{lem:fu-lemma}, a set $R\subseteq \mathcal{V}^{0}$ has some $\alpha$
    with $\Rd{\alpha}=R$ iff there is a node $(\beta, R)\in\mathcal{T}$ for some $\beta$. As a result, we can let
    \begin{eqnarray*}
      V(\Delta) = \{(X, R) ~|~ X \in \mathcal{V} \land  \exists \beta. (\beta, R) \in \mathcal{T} \}
    \end{eqnarray*}
    Clearly $V(\Delta)$ can be computed in $\exp(|\Delta|)$ time.
  \item For each $(X,R)\in V(\Delta)$, we enumerate all the rules of the form
    $X\act{\lambda}\gamma Y\delta$. If there is a $\delta$ with
    $\textsc{Var}(\delta) \not\subseteq R$, then we add an edge $e=((X,R), (Y,R'))$ to
    $E(\Delta)$ with $\mathcal{W}(e)=1$, where $R'=\Rd{\delta\beta}$ and
    $(\beta,R)\in\mathcal{T}$ for some $\beta$. By Lemma \ref{lem:fu-lemma}, $R'$ can be read from
    $\mathcal{T}$. If for all rules of form $X\act{\lambda}\gamma Y\delta$ we have
    $\textsc{Var}(\delta) \subseteq R$, then we add an edge $e=((X,R), (Y,R))$ to $E(\Delta)$
    with $\mathcal{W}(e) = 0$. For each node $(X,R)$, we can compute all the edge from $(X,R)$
    with its weight value in $|\Delta|^2\exp(|\Delta|)$ time. As a result, $E(\Delta)$ and
    $\mathcal{W}$ can be computed in $\exp(|\Delta|)$ time.
  \end{enumerate}
\end{proof}

\subsection{Proof of Lemma \ref{lem:non_reg_cha}}
\lemnonregcha*
\begin{proof}
  Let $\Delta=(\mathcal{V}, \mathcal{A}, \mathcal{R})$ and $\alpha=X_1X_{2}\dots X_k$.
  It is easy to verity that if there is a witness path for some $X_iX_{i+1}\dots X_k$, then for
  any $m>0$ there is some $w_m$
  s.t. $\alpha\act{w_m}\beta_m$ and $\bnorm{\beta_m} \geq m$. It follows that $\alpha$ is not
  \reg{$\beq$}. Now suppose $\alpha$ is not \reg{$\beq$}, then there is some $\beta$
  reachable from $\alpha$ and $\bnorm{\beta} - \bnorm{\alpha} > (|\mathcal{V}|\cdot2^{|\mathcal{V}|}+1)r_{\Delta}\norm{\Delta}+2\norm{\Delta}$, where
  \begin{eqnarray}
    \label{eq:1}
    r_{\Delta} & = & \textrm{max}\{ |\alpha| ~|~ X\act{\lambda}\alpha \in \mathcal{R}\} \\
        \norm{\Delta} & = & \textrm{max}\{\norm{X} ~|~ X\in\mathcal{V}\}
  \end{eqnarray}
  Now Let
  \begin{equation}
    \label{eq:8}
    \alpha = \alpha_0 \act{\lambda_1} \alpha_1 \act{\lambda_2} \dots \act{\lambda_m} \alpha_m = \beta
  \end{equation}
  be a transition sequence that $\alpha$ reaches $\beta$. From \eqref{eq:8} we can compute a
  sequence of indices $0 \leq s_0 < s_1 < \dots < s_k = m $ as follows
  \begin{eqnarray}
    \label{eq:10}
    h_0 & = & \textrm{min}\{ |\alpha_j| ~|~ 0 \leq j \leq m\} \\
    s_0 & = & \textrm{min} \{ j ~|~ |\alpha_j| = h_0\} \\
    h_{i+1} & = & \textrm{min}\{ |\alpha_j| ~|~  s_i < j \leq m\} \\
    s_{i+1} & = & \textrm{min} \{ j ~|~ |\alpha_j| = h_{i+1} \land s_i < j \leq m \}
  \end{eqnarray}
  Let $\alpha_{s_i} = Y_i\beta_i$, by the definition of $s_i$ and $s_{i+1}$, there are
  $\lambda_i\in\mathcal{A}$, $w_i\in\mathcal{A}^{*}$ and
  $\sigma_i,\delta_i \in \mathcal{V}^{*}$
  s.t. $Y_{i} \act{\lambda_{i}} \sigma_i Y_{{i+1}} \delta_i \in \mathcal{R}$, $\sigma_i\act{w_i}\epsilon$ and
  $\delta_i\beta_{{i}} = \beta_{{i+1}}$ for $0\leq i < k$. As a result we can rewrite the subsequence of \eqref{eq:8} from
  $\alpha_{s_0}$ to $\alpha_{s_k}$ by
  \begin{equation}
    \label{eq:24}
    Y_0\beta_0\actx{\lambda_0w_0} Y_1\delta_0\beta_0\actx{\lambda_1w_1}\dots \actx{\lambda_{k-1}w_{k-1}}Y_k\delta_{k-1}\dots\delta_0\beta_0 = Y_k\beta_k
  \end{equation}
 By
  definition of $s_0$ we have $\bnorm{Y_0\beta_0}-\bnorm{\alpha} \leq \norm{\Delta}$ and by
  assumption we have
  $\bnorm{Y_k\beta_k}-\bnorm{\alpha} >  (|\mathcal{V}|\cdot2^{|\mathcal{V}|}+1)r_{\Delta}\norm{\Delta} + 2\norm{\Delta}$,
  thus
  \begin{eqnarray}
    \bnorm{\delta_{k-1}\dots \delta_0}^{\Rd{\beta_0}} & = &(\bnorm{Y_k\beta_k} - \bnorm{Y_0\beta_0}) + (\bnorm{Y_0}^{\Rd{\beta_0}}-\bnorm{Y_k}^{\Rd{\beta_k}}) \\
    & >&  (|\mathcal{V}|\cdot2^{|\mathcal{V}|}+1)r_{\Delta}\norm{\Delta}
  \end{eqnarray}
  On the other hand let $v_i=\bnorm{\delta_i}^{\Rd{\beta_i}}$ for $0\leq i < k$ and we have
  \begin{equation}
  \bnorm{\delta_{k-1}\dots \delta_0}^{\Rd{\beta_0}} = \sum_{i=0}^{k-1}v_i \leq kr_{\Delta}\norm{\Delta}\label{eq:25}
\end{equation}
  It follows that there are $0\leq i_1 < i_2 \leq k$ s.t. $Y_{i_1}=Y_{i_2}$,
  $\Rd{\beta_{i_1}} = \Rd{\beta_{i_2}}$ and $\sum_{j=i_1}^{i_2-1}v_{j} > 0$. Note that the
  transition sequence of \eqref{eq:24} induces the following path in $G(\Delta)$
  \begin{equation}
    \label{eq:13}
    (Y_0, \Rd{\beta_0}) \act{u_0} (Y_1, \Rd{\beta_1}) \act{u_1} \dots \act{u_{k-1}} (Y_k, \Rd{\beta_k})
  \end{equation}
  By definition of $G(\Delta)$ we have $u_i = 1$ if $v_i > 0$. It follows that
  $\sum_{j=i_1}^{i_2-1}u_j > 0$.
  By definition of $s_0$ and $h_0$, we have $Y_0\beta_0=X_{k-h_0+1}\dots X_k$.
  As a result the subpath of \eqref{eq:13} that from $(Y_0, \Rd{\beta_0})$ to
  $(Y_{i_2}, \Rd{\beta_{i_2}})$ is a witness path in $G(\Delta)$ for $X_{k-h_0+1}\dots X_k$.
\end{proof}
\subsection{Proof of Theorem \ref{thm:exp-reg}}
\thmexpreg*
\begin{proof}
Given a normed BPA system $\Delta=(\mathcal{V},\mathcal{A},\mathcal{R})$ and a process $\alpha$
we use the following procedure to decide if $\alpha$ is \reg{$\beq$}.
\begin{enumerate}
\item Construct $G(\Delta)$.
\item Compute the set of growing nodes $V'(\Delta)\subseteq V(\Delta)$. A node $(X,R)$ in
  $V(\Delta)$ is a \emph{growing node} if there is simple circle of the following form in $G(\Delta)$
  \[(X,R) \act{u_1} (Y_1,R_1) \act{u_2}\dots \act{u_{k-1}} (Y_{k-1},R_{k-1}) \act{u_k}(X,R)\]
  and $\sum_{i=1}^{k}u_i > 0$. For each $(X,R)\in V(\Delta)$ we can decide
  whether $(X,R)\in V'(\Delta)$ as follows. Let $B_X^R(\Delta)$ be the set of nodes
  reachable from  $(X,R)$ via a path of total weight $0$. It is necessary that $|B_X^R(\Delta)| \leq |\mathcal{V}|$. Now let
  \[ A_X^R(\Delta) = \{ (Y,R') ~|~ (X',R)\in B_X^R(\Delta) \land ((X',R), (Y,R')) \in E(\Delta) \land \mathcal{W}((X',R), (Y,R')) = 1\}\]
   Clearly $|A_X^R| \leq |\mathcal{V}||\Delta|$ and the computation of $A_X^R$ can done in
   $|G(\Delta)|$ time. We can verify that $(X,R)$ is a growing node iff $(X,R)$ is reachable
   from some $(Y,R')\in A_X^R(\Delta)$.
\item Let $\alpha=X_k\dots X_1$ and $\gamma_i= X_i\dots X_1$ for $1 < i \leq k$; and let
  $\gamma_1=\epsilon$. If there is some $i$ s.t. $(X_i, \Rd{\gamma_i})$ can reach a node in $V'(\Delta)$
  in $G(\Delta)$, then output ``not regular''; otherwise output ``regular''.
\end{enumerate}
By Proposition \ref{prp:exp_G}, step (1) can be done in $\exp(|\Delta|)$ time.  By Theorem
\ref{thm-hh2015} computing $\Rd{\gamma_i}$ can be done in
$\mathrm{poly}(|\alpha|)\cdot\exp(|\Delta|)$ time.  The other part of step (2) and step (3) only
checks reachability properties in $G(\Delta)$, which can be done $|G(\Delta)|^2$ time. Note
$G(\Delta)$ is of $\exp(|\Delta|)$ size. As a result the whole procedure can be done in
$\mathrm{poly}(|\alpha|)\cdot\exp(|\Delta|)$ time.
\end{proof}


\end{document}